\newtheorem{pro}{Problem}
\def\for{\mbox{\it for}}
\def\rf{\mbox{\it rf}}
\def\Vp{\mbox{\it Vp}}
\def\VP{\mbox{\it VP}}
\def\VL{\mbox{\it VL}}
\def\TI{\mbox{\it TI}}
\def\WVP{\mbox{\it WVP}}
\def\OPT{\mbox{\it OPT}}
\def\SCR{\mbox{\it SCR}}
\def\EdP{\mbox{\it EdP}}
\def\EdI{\mbox{\it EdI}}
\def\inPL{\mbox{\it inPL}}
\def\seg#1{\overline{#1}}
\def\P{\cal P}
\def\S{\cal S}
\def\F{\cal F}
\begin{document}
\title{Art Gallery Plus Single Specular-Reflection}

\author{Arash Vaezi\inst{1}\email{avaezi@ce.sharif.edu}\\ \and
Bodhayan Roy\inst{2}\email{broy@maths.iitkgp.ac.in}\Comment{supported by an ISIRD Grant from Sponsored Research and Industrial Consultancy, IIT Kharagpur}\and
\\ Mohammad Ghodsi\inst{1,3}\email{ghodsi@sharif.edu}\Comment{This author's research was partially supported IPM under grant No: CS1392-2-01.}
}
\authorrunning{A. Vaezi et al.}

\institute{Sharif University of Technology, Azadi Ave
Tehran, Iran \and
Indian Institute of Technology, Kharagpur \and
Institute for Research in Fundamental Sciences (IPM)
}
\maketitle              % typeset the header of the contribution
\begin{abstract}
Given a simple polygon $\P$, in the Art Gallery problem, the goal is to find the minimum number of guards needed to cover the entire $\P$, where a guard is a point and can see another point $q$ when $\overline{pq}$ does not cross the edges of $\P$.
 This paper studies a variant of the Art Gallery problem in which the boundaries of $\P$ are replaced by single specular-reflection edges, allowing the view rays to reflect once per collision with an edge. This property allows the guards to see through the reflections, thereby viewing a larger portion of the polygon.

For this problem, the position of the guards in $\P$ can be determined with our proposed $\mathcal{O}(\log n)$-approximation algorithm. Besides presenting an algorithm with the mentioned approximation factor, we will see that reflection can decrease the number of guards in practice. The proposed algorithm converts the generalized problem to the Set Cover problem.

\keywords{Reflection-edges  \and Art Gallery \and Visibility \and Approximation}
\end{abstract}

\section{Introduction and Related Works}
Consider a simple polygon $\P$ with $n$ vertices.   
The maximal sub-polygon of $\P$ visible to a point $q$ in $\P$ is called the \emph{visibility polygon} of $q$, which is denoted by $\VP(q)$. 
There are linear-time algorithms to compute $\VP(q)$ when the viewer is a point \cite{1}.
 For the segment $\seg{pq}$ inside $\P$, the \emph{weak visibility polygon} of $\seg{pq}$, denoted as $\WVP(\seg{pq})$, is the maximal sub-polygon of $\P$ visible to at least one point (not the endpoints) of $\seg{pq}$.
A polygon $\cal Q$ inside $\P$ is said to be \emph{completely visible} from $\seg{pq}$ if for every point $z \in \cal Q$ and for any point $w \in \seg{pq}$, $w$ and $z$ are visible (denoted as $\mathit{CVP}$ short from the completely visible polygon). Also, $\cal Q$ is said to be \emph{strongly visible} from $\seg{pq}$ if there exists a point $w \in \seg{pq}$, such that for every point $z \in \cal Q$, $w$ and $z$ are visible ($\mathit{SVP}$). These different visibility types can be computed in linear time (\cite{2,avis}).

The visibility of a point can only be blocked by a part of the polygon. In a polygon, a vertex is called \emph{convex} if the internal angle of the polygon (i.e., the angle formed by the two edges at the vertex with the polygon inside the angle) is less than $\pi$ radians (180°); otherwise, it is called \emph{reflex}. Reflex-vertices determine the start of the blocked visibility of a point inside $\P$.

The visibility of the point $q$ can be extended if some edges of $\P$ reflect the visibility rays incident on them. 
While different types of reflections have been studied by researchers \cite{5}, we only consider the \emph{specular} type. In the specular reflection, a view ray is reflected into a single outgoing direction, where the angle of the reflection and the incident angle are equal. 
Another well-known type of reflection is called diffuse reflection. A ray reflected via diffuse reflection may assume all angles between $0$ and $\pi$.
This paper only deals with a single reflection per view ray when the reflection is specular.

First, we assume that all edges visible to $q$, the viewer, are reflection edges. Then, we obtain the edges whose visibility can make additional visibility to the viewers. We will find the exact part of every edge whose visibility is beneficial for at least one viewer. To do this, we use the algorithm presented in \cite{tcs}. 
The two points $x$ and $y$ inside $\P$ can see each other through $e$, if and only if they are visible via reflection rays on $e$. We call these points \emph{reflected-visible}.
 ~Sometimes, to be more specific, we use $e$-reflected-visible.
 ~In order to compute the $e$-reflected-visibility area, only \emph{the visible part} of $e$ should be taken into account.

Given a simple polygon $\P$, in the Art Gallery problem, the goal is to determine the minimum number of stationary points, called \emph{guards}, that can be sufficient to see every point in the interior of a given polygon $\P$. This article intends to empower the guards with the single specular reflection edges.
If guards are placed at the vertices of $\P$, they are called \emph{vertex guards}.
If guards are placed at any point of $\P$, they are called \emph{point guards}. 
Here, we consider the point guards.

The Art Gallery problem for guarding simple polygons was proved to be NP-hard for vertex guards by Lee et al. \cite{lee}. This proof was later generalized for point guards \cite{cite:aggarval}. 
In 1987, Ghosh presented approximation algorithms for vertex guards, achieving a ratio of $\mathcal{O}(\log n)$ \cite{cite:Ghosh}, which was improved up to $\mathcal{O}(\log\log\OPT)$ by King and Kirkpatrick in 2011 \cite{cite:k}. It has been conjectured that constant-factor approximation algorithms exist for these problems.
In 2018 Bhattacharya,  S.Kumar Ghosh and S. Prasant Pal presented a constant approximation algorithm for guarding simple polygons by using vertex guards \cite{constant-factor}. 

Assuming integer coordinates and a specific general position on the vertices of $\P$, Bonnet and Miltzow \cite{cite:logopt} presented the first $\mathcal{O}(\log \OPT)$-approximation algorithm for the point guard problem. 
Their result was extended to be one of the most recent works on the point guarding problem in 2020 \cite{cite:2020}.  Assumption 1 (Integer Vertex Representation); vertices are given by integers, represented
in the binary form.
Assumption 2 (General Position Assumption); no three extensions meet in a point of $\P$
which is not a vertex and no three vertices are collinear.
In this paper, we proceed with these assumptions too.

{\bf Our Setting:}
Every guard can see a point if it is directly visible to it or if it is reflected-visible. This is a natural and non-trivial extension of the classical Art Gallery setting. The problem of visibility via reflection has many applications in wireless networks and Computer Graphics, where the signal and the view ray can be reflected on walls. 
There is a considerable literature on geometric optics (such as \cite{optic2} and \cite{optic}), and on the chaotic behavior of a reflecting ray of light or a bouncing billiard ball
(see, e.g., \cite{reflection}, \cite{billiard-2}, and \cite{billiard-4}).
Particularly, regarding the Art Gallery problem, reflection helps in decreasing the number of guards. 
Special cases of this problem have been described by Chao Xu in 2011 \cite{ChaoXu} and by A. Vaezi et al. in 2020 \cite{algorithmica}. 

\begin{pro}[\bf the Art+R Problem]
\label{prob}
Given a simple polygon, $\P$, we intend to find the minimum number of point guards that cover $\P$, considering an extra capability that the single specular reflection is allowed for every edge of $\P$.
\end{pro}

The main idea of solving the above-mentioned problem is to find some areas inside $\P$ in which all points have approximately the same visibility polygon. In other words, if we place a guard $g$ inside such an area, no matter where $g$ lies, it can see approximately the same sub-area of $\P$. We will see what we mean by this approximation later. Such areas are called \emph{guarding-regions}. First, we define some terminologies; then we will see how to compute guarding-regions. When we have the set of all guarding-regions in $\P$, we use the greedy algorithm of the Set Cover problem to find the minimum number of guarding-regions that represent the optimal places for guards that exist in a solution for the problem~\ref{prob}.
\section{Definitions}
This section covers a few terminologies we will use throughout the paper. 

Consider a sub-region $\alpha$ of $\P$, a guard $g$ may be situated in one of the following positions:   1- the guard $g$ does not see any point of $\alpha$, either directly or via reflection. 2- The guard $g$ can see $\alpha$ only partially, either directly or via reflection. 3- The guard $g$ covers $\alpha$  entirely via direct visibility. 4- The guard $g$ covers $\alpha$ entirely via reflection. 5- The guard $g$ covers $\alpha$ entirely; however, considering \emph{both} direct visibility and reflection, i.e. $\alpha$ is partially directly visible and partially reflected-visible to $g$.
In the first and second situations, we consider $\alpha$ as invisible for $g$. That is, for $\alpha$ to be counted as area-visible to $g$, all of its points must be visible to $g$. However, this visibility can be a combination of direct visibility and reflection through reflection-edges. Note that more than one edge may have to be considered regarding reflection;  we need to count on all edges that might add some area to the reflected-visibility of a guard. Moreover, a sub-region $sr$ is area-visible to another sub-region $sr'$, if all points of $sr$ are visible to $sr'$ (either directly or via reflection).
\begin{definition}
\emph{Area-Visibility}: A region $r$ is called area-visible to a source (a point, a segment or a region) if all points of $r$ are visible to all points of $source$.
\end{definition}

\begin{definition}
\label{def:scr}
\emph{Second-order-Convex-Regions}: Each minimal region formed from the following steps is called second-order-convex-regions; the set containing all of them is denoted by $\SCR$.
\begin{enumerate}
    \item Ignore the edges of $\P$. Connect every vertex of $\P$ to each other, and compute
all intersections of all  lines crossing them.
    \item Draw the lines between the previous intersection points with the reflex vertices
of $\P$.
\item  Compute all intersections of all of the above-mentioned lines.
\item  Now trace $\P$ and compute the intersection of the boundary of $\P$ and the
above-mentioned lines.
\end{enumerate}
 \end{definition}
 Figure~\ref{fig.convex-regions} (a) to (g) illustrates an example of decomposing $\P$ into second-order-convex-regions.

  Later, we will see how this decomposition helps us to prove our approximation factor analysis.

  As mentioned before, $\P$ will be decomposed into other smaller special sub-regions called ``guarding-regions”. 
 If we put two guards in different positions inside a guarding-region, they can get area-visible from the same sub-set of second-order-convex-regions. In order to compute guarding-regions, we decompose every second-order-convex-region into guarding-regions.  
For a better definition of the guarding-regions, see the following definition.
 \begin{definition}
  \emph{guarding-regions}:
 Consider a simple polygon $\P$; decompose $\P$ into second-order-convex-regions. Given a query point as a place for a guard $g$ in $\P$, $g$ can make a sub-set of second-order-convex-regions area-visible. Call this set the visible list of $g$ and denote this list by $\VL$. A guarding-region $gr$ is a region $g$ places inside $gr$; if we move $g$ in any arbitrary position inside $gr$, $\VL$ will not change. In other words, the visible list of every point $p \in gr$ is the same.
  \end{definition}

We will estimate the visibility of a guard in an arbitrary position inside a guarding-region $gr$, by a list of second-order-convex-regions area-visible to $gr$. Denote this list by $\VL(gr)$.
Based on this definition, 
for a more straightforward presentation, we may refer to guarding-region's visibility instead of guard visibility. 
In order to compute guarding-regions inside a second-order-convex-region, we need to count on the area-visibility of each second-order-convex-region individually. A sub-region inside each second-order-convex-region that sees another second-order-convex-region completely is called a temporary sub-region; for short, we call each of them a temp-sub-region. See the following definition.

\begin{definition}
\emph{Temp-sub-region:} 
Given the two second-order-convex-regions $scr_{i}$ and $scr_{j}$ ($i\neq j$), suppose there is a sub-region inside $scr_{i}$ that can make $scr_{j}$ area-visible; we call this sub-region a temp-sub-region and denote it by $tmp$.
\end{definition}
Define $t_{i}$ to be the set of all temp-sub-regions inside $scr_{i}$. These temp-sub-regions will be computed by checking the edges of each second-order-convex-region individually. Suppose $scr_{i}$ has $Ed_{i}$ edges. Denote the $j^{th}$ edge of $scr_{i}$ by \emph{$e_{j}(scr_{i})$} $1 \leq j \leq Ed_{i}$. 

Each temp-sub-region eventually will be used to divide second-order-convex-regions into guarding-regions. 

%%%%%%%%%%%%%%%%%%%%%%%%%%%%%%%%%%%%%%%%%%%%%%%%

\section{Algorithm}
\label{full.discription}
Here we present an algorithm that uses three steps to come up with a solution with a $\log n$-approximation factor.

%%%%%%%%%%%%%%%%%%%%%%%%%%%%%%%
%%%%%%%%%%%%%%%%%%%%%%%%%%%%%%%
%%%%%%%%%%%%%%%%%%%%%%%%%%%%%%%

Algorithm \ref{algo.main} illustrates a pseudo-code for our approach. In Step 1, according to Definition~\ref{algo.main}, we decompose $\P$ into second-order-convex-regions, with $|\SCR|$ indicating the number of all second-order-convex-regions. Step 2 contains three main $\for$ loops that correspond to each second-order-convex-region; we check their edges one by one to find temp-sub-regions. Consider $scr_{i}$ with $Ed_{i}$ edges. Any guard $g$ inside $scr_{i}$ may see other second-order-convex-regions ($scr_{j}$ $i\neq j$) only through
    $scr_{i}$'s edges. 
    Step 2 uses a procedure called $Find$ to compute temp-sub-regions corresponding to an edge of a given second-order-convex-region. All in all, Step 2 computes temp-sub-regions and puts them in a set denoted by $t$. The $Find$ procedure is the only part using reflection.

Step 3 receives all temp-sub-regions and decomposes them into guarding-regions. This step uses another procedure called $Decompose$. This procedure is given a second-order-convex-region as a parameter, splitting that second-order-convex-region into guarding-regions. The given second-order-convex-region, denoted by $scr_{i}$, will be decomposed into a set of guarding-regions called $s_{i}$. Denote $\S$ to be the final set that contains all guarding-regions, and let $|\S|$ denote the cardinality of $\S$. The algorithm computes every $s_i$ and adds it to $\S$ ({\small $1\leq i \leq {\tiny |\SCR|}$}).
As demonstrated by line $17$, $\S$ contains ordered-pairs ($\mathit{<scr,\VL(sr)>}$) of all guarding-regions. Every guarding-region owns a visible-list. Denote the cardinality of a visible-list by $|\VL|$. In Step 4, the set $\F$ is the final solution of the algorithm, which is a sub-set of $\S$. Denote the cardinality of $\F$ by $|\F|$.  To cover $\P$, all second-order-convex-regions must get covered by the guarding-regions chosen in the final set $\F$. The problem of selecting the minimum number of sub-sets (guarding-regions) in $\S$ can be seen as an instance of the well-known Set Cover problem. The solution leads to the final set $\F$.

%%%%%%%%%%%%%%%%%%%%%%%%%%%%%%%%%%%%%%
\begin{theorem}
\label{theo.proof.opt}
There is a $\log (n)$-approximation algorithm for the Art+R problem.% that is the best possible solution.
\end{theorem}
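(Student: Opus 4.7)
The plan is to reduce Problem~\ref{prob} to an instance of Set Cover on the universe $\SCR$ with covering family $\{\VL(gr) : gr \in \S\}$, and then invoke the standard $H_N$-approximation bound for the greedy algorithm. I would first observe that the second-order-convex-regions partition $\P$, so that covering $\P$ is equivalent to area-covering every element of $\SCR$, and that $|\SCR|$ is polynomial in $n$: Definition~\ref{def:scr} uses $\mathcal{O}(n^2)$ lines through pairs of vertices, together with polynomially many further lines through reflex vertices and the resulting intersection points, so the planar arrangement it induces has polynomially many cells. Consequently $\ln|\SCR| = \mathcal{O}(\log n)$.

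Next I would prove the key equivalence between the two optima. In the easy direction, any Set Cover solution $\F \subseteq \S$ yields a feasible Art+R solution of the same cardinality: plant one point guard at an arbitrary location inside each chosen guarding-region, and by the defining property of $\VL(gr)$ that guard area-sees exactly the second-order-convex-regions in $\VL(gr)$, hence the union of their views covers $\P$ pointwise. In the harder direction I would show $\OPT_{\mathrm{SC}} \le \OPT$, where $\OPT$ denotes the Art+R optimum: given any optimal guard placement, each guard $g_i$ lies in some guarding-region $gr_i$, and the family $\{gr_i\}$ is a valid set cover because every $scr_j \in \SCR$ contains at least one point covered by some $g_i$, and by the ``atomic'' property of second-order-convex-regions that point can be covered only if the whole of $scr_j$ is area-visible to $g_i$, whence $scr_j \in \VL(gr_i)$.

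Combining these two directions with the classical guarantee for greedy Set Cover, the output $\F$ of Algorithm~\ref{algo.main} satisfies $|\F| \le (\ln|\SCR|+1)\cdot \OPT_{\mathrm{SC}} \le (\ln|\SCR|+1)\cdot \OPT = \mathcal{O}(\log n)\cdot \OPT$, which proves the theorem.

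The principal obstacle will be establishing the ``atomic'' visibility property used in the harder direction, i.e. that the (single-reflection) visibility region of any point guard meets each second-order-convex-region in either the empty set or the whole region. For direct visibility this follows from the classical observation that visibility boundaries are supported by lines through reflex vertices, which is captured by the construction in Definition~\ref{def:scr}. For a single specular reflection one must verify by a careful case analysis that every critical line introduced by a bounce coincides with one of the lines connecting an intersection of vertex--vertex lines to a reflex vertex; this is the delicate step on which the entire approximation bound rests.
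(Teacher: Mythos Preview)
Your reduction to Set Cover and the polynomial bound on $|\SCR|$ are fine, and they match the paper's overall strategy. The genuine gap is the ``atomic visibility'' claim in your harder direction: it is \emph{false} that the (direct or single-reflection) visibility polygon of an arbitrary point guard meets every second-order-convex-region in either the empty set or the whole region. The boundary of the direct visibility polygon of a guard $g$ is supported by lines through $g$ and reflex vertices of $\P$; since $g$ may lie anywhere, these lines are in general \emph{not} among the finitely many lines used in Definition~\ref{def:scr} (vertex--vertex lines and lines from their intersection points to reflex vertices). Hence a guard can see a second-order-convex-region only partially, and your inequality $\OPT_{\mathrm{SC}}\le\OPT$ does not follow. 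Your own final paragraph identifies exactly the step that breaks; it cannot be rescued by a case analysis, because the property simply fails.

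The paper handles this difficulty differently. It accepts that guards in the true optimum $\OPT^{*}$ may cover some $scr$'s only partially, and proves instead the weaker comparison $|\OPT(\S)|\le \tfrac{3}{2}\,|\OPT^{*}|$ (Lemma~\ref{lem:2-app}). The argument is combinatorial: if two guards $g^{i}_{opt},g^{j}_{opt}\in\OPT^{*}$ jointly (via partial visibility) cover two or more distinct second-order-convex-regions, then the second-order construction---specifically, the step that joins reflex vertices to intersection points of vertex--vertex lines---forces those partially visible pieces to themselves be unions of whole second-order-convex-regions, a contradiction. Consequently any pair of optimal guards can share at most \emph{one} partially covered $scr$, and for each such pair one extra guard in $\OPT(\S)$ suffices, yielding the $3/2$ factor. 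Combined with the greedy $\log|\SCR|$ bound and $|\SCR|=\mathrm{poly}(n)$, this gives the $\mathcal{O}(\log n)$ approximation. To repair your proof you should drop the atomic claim and replace it by an argument of this type comparing $\OPT(\S)$ to $\OPT^{*}$ up to a constant factor.
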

\begin{proof}
By running Algorithm~\ref{algo.main}, we can convert the Art+R problem to an instance of the Set Cover problem. Using a greedy approach to solve an instance of the Set Cover problem, we can obtain the $\log(n)$ factor for the approximation. Subsection~\ref{sec:approximation-analysis} deals with the approximation analysis of the algorithm in details.
\end{proof}

%%%%%%%%%%%%%%%%%%%%%%%%%%%%%%%%%%%%%
%%%%%%%%%%%%%%%%%%%%%%%%%%%%%%%%%%%%%
%%%%%%%%%%%%%%%%%%%%%%%%%%%%%%%%%%%%%
\begin{algorithm}[t]
\caption{Main Procedure}
\label{algo.main}
\begin{algorithmic}[1]
\Procedure{Art+R($\P$)}{}
\State define $\longleftarrow$ = Add an element to a set/list.
\State define $tsr_{i}(scr_{j})$ = The temp-sub-region in the $i^{th}$ second-order-convex-region, where $scr_{j}$ is area-visible to $tsr_{i}(scr_{j})$.

//{*\small { \bf Step 1}}

\State decompose $\P$ into second-order-convex-regions and put them $\SCR$.

//{*\small { \bf Step 2}}

\For{ i= 1 ; $i \leq  |\SCR|$; i++ } //{\small * for every second-order-convex-region}
    \For{ k=1 ; $k \leq  |Ed_{i}|$; k++ } //{\small * pick $scr_{i}$'s edges one by one}
               \For{ j=1 ; $j \leq  |\SCR|$; j++ } //{\small * find each $tsr$ corresponding to every $scr_{j}$ inside each $scr_{i}$ }
                  \State List $L$ = Find($ed_{ik}$,$scr_{j}$) //{\small * $ed_{ik}$ is the $k$th edge of $scr_{i}$ }
                  \For{ each $tcr \in L$} 
                  \State   {\bf if} $tsr_{i}(scr_{j})$ sees $scr_{j}$ {\bf do}
                  \State    $\ \ \ $ $\VL(temp_{i}(scr_{j}))$ $\longleftarrow$ $scr_{j}$
                  \State    $\ \ \ $ $t_{i}$ $\longleftarrow$ $tsr_{i}(scr_{j})$
                  \EndFor
               \EndFor
    \EndFor
\EndFor
//{*\small { \bf Step 3}}

//{\small \emph{*Extracting guarding-regions from the temp-sub-regions in each $scr$:}}
\For{ i= 1 ; $i \leq  |\SCR|$; i++}
        \State Decompose($scr_{i}$)
        \State $s_{i}$ $\longleftarrow$ \emph{$<$the result regions, corresponding visible-lists$>$};
        \For{ each~$gr$~in $s_{i}$ }
                 \State $\S$ $\longleftarrow$ $\mathit{<gr,\VL_{gr}>}$;
            \EndFor
\EndFor
//{*\small { \bf Step 4}}
\State $\F$ $\longleftarrow$ best possible sub-set of $\S$ that covers $\P$.
\State Return $\F$
\EndProcedure
\end{algorithmic}
%\captio}
\end{algorithm} 
%%%%%%%%%%%%%%%%%%%%%%%%%%%%%%%%%%%%%%%%%%%%%%%%%
\begin{figure}[thp]%figure8
\begin{center}
\includegraphics[scale=0.6]{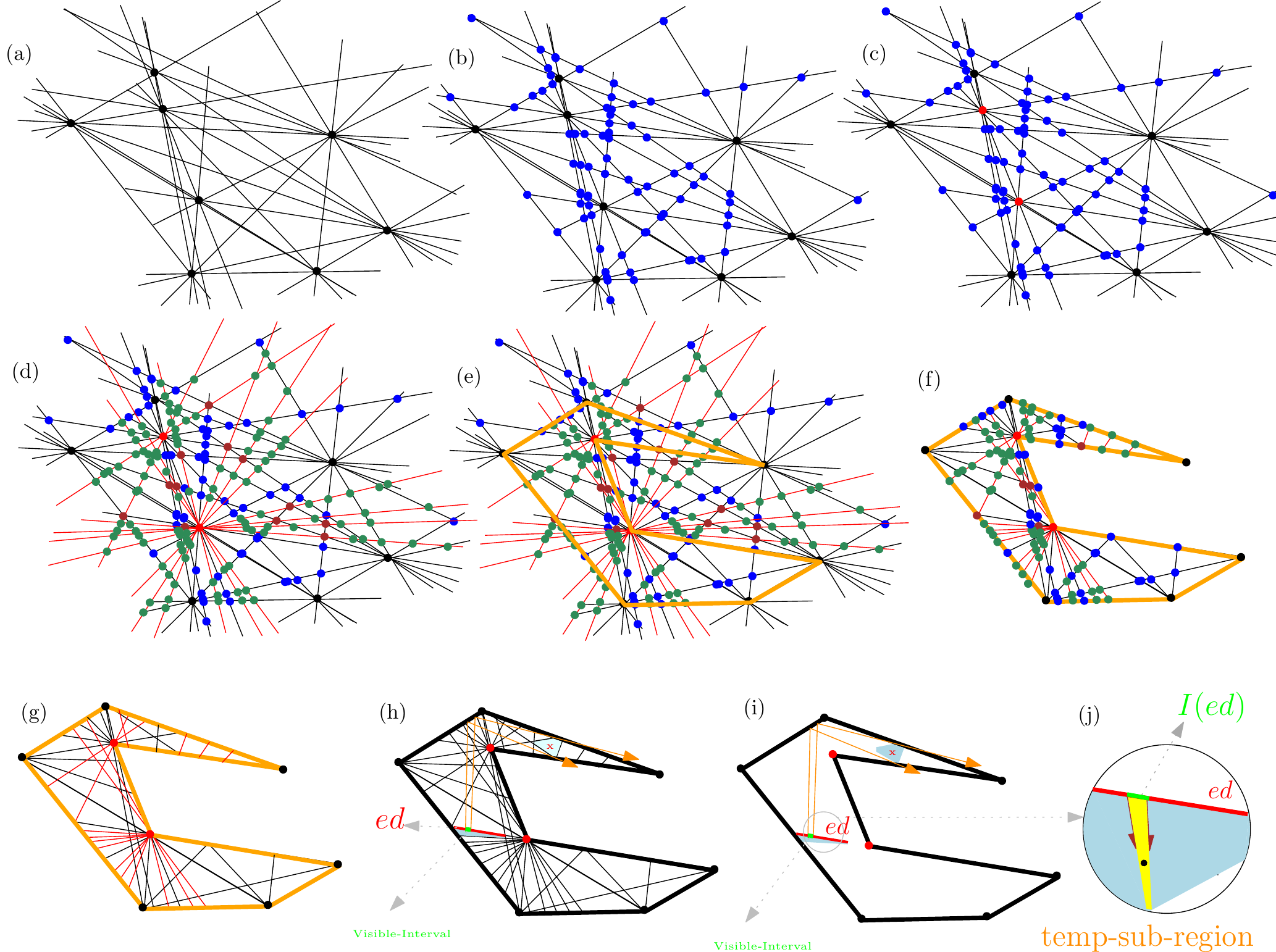}
\caption{ Figures (a) and (b) illustrate the first step of Algorithm~\ref{algo.main}. Figure (c) shows the reflex vertices. Figure (d) also illustrates that reflex vertices must get connected to the intersection points. Although not all lines are illustrated, figure (d) shows most of them. Figure (e) shows the boundary of the polygon. Figure (f) points out the interior of the polygon. Meanwhile, figure (g) illustrates the lines without the intersection points. Figure (h) illustrates two second-order-convex-regions in blue. The one denoted by $x$ is the target second-order-convex-region. We intend to find a temp-sub-region inside the other one (the $source$) to make $x$ area-visible. One edge of the source second-order-convex-region is denoted by $ed$. We use the backward rays making $x$ area-visible to $ed$. In figure (i), the backward half-lines make a temp-sub-region inside the specified second-order-convex-region. Every point on the green interval $I(ed)$ can make $x$ area-visible. In fact, $I(ed)$ is the strong-visible part of $ed$ for making $x$ area-visible. Figure (j) illustrates the temp-sub-region inside the source in yellow.}
\label{fig.convex-regions}
\end{center}
\end{figure}  

%%%%%%%%%%%%%%%%%%%%%%%%%%%%%%%%%%%%%%%%%%%%%%%%%
%%%%%%%%%%%%%%%%%%%%%%%%%%%%%%%%%%%%%%%%%%%%%%%%%

\subsection{{\bf procedure Find (a segment, a second-order-convex-region)}}
\label{subsec.algo.convex}
   Consider $scr_{i}$ with $Ed_{i}$ edges. Any guard $g$ inside $scr_{i}$ may see other second-order-convex-regions ($scr_{j}$ $i\neq j$) only through
    $scr_{i}$'s edges. 
   Line 6 of the Algorithm~\ref{algo.main}, in turn, checks all such edges for $scr_{i}$.
    Consider an edge $ed$ of $scr_{i}$. Suppose $g$ can see $scr_{j}$ $i\neq j$ through $ed$.
    There must be an interval $I(ed)$ of $ed$ from which the rays from $g$ to $scr_j$ cross $I(ed)$. 
    The interval $I(ed)$ is the strong-visible part of $ed$, whereas $scr_{j}$ is area-visible to $I(ed)$. i.e., every point on $I(ed)$ can see $scr_{j}$. The reason this interval is important for us is because it determines an area inside $scr_{i}$ in which every point can make $scr_{j}$ area-visible. Using the endpoints of $I(ed)$, we can find the rays that can make $scr_{j}$ area-visible. By extending these rays backward inside $scr_{i}$, we can find a temp-sub-region $tsr$ inside $scr_{i}$.
    ~Both $scr_{i}$ and $scr_{j}$ must be added to the visible-list of $tsr$ (see Figure~\ref{fig.convex-regions}).

In this subsection, we intend to \emph{find} intervals on a segment $ed$ as a \emph{viewer} that can completely see a second-order-convex-region called $target$. Then, we will compute the temp-sub-regions corresponding to $ed$ inside the given second-order-convex-region.

Suppose that $ed$ is a given edge of a second-order-convex-region called $source$. 
There are three cases: 1- an interval (a point or segment) on $ed$ can see the entire target {\bf directly}. 2- An interval of $ed$ can see the entire target {\bf via reflection}. 3- Interval(s) of $ed$ can see the entire target {\bf considering both direct and reflected-visibility}. (In this case, there might be more than one interval on $ed$ that can make the given second-order-convex-region area-visible.)

\begin{lemma}
\label{lem:find}
Given a second-order-convex-region denoted as $target$ and a segment $ed$, which is an edge of another second-order-convex-region denoted by $source$, we can find temp-sub-regions in $source$ that can make the target area-visible. This takes $\mathcal{O}(n|\SCR|\log n)$ time complexity, where $|\SCR|$ indicates the complexity of a second-order-convex-region.
\end{lemma}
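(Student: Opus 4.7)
My plan is to handle the three cases of the preceding discussion separately---direct visibility only, reflected visibility only, and a combination of both---compute in each case the subinterval(s) of $ed$ from which every point of $target$ is seen, and then lift these subintervals backward into $source$ to form temp-sub-regions. For the direct case, I would compute the portion $I_d(ed)\subseteq ed$ from which $target$ is area-visible using the linear-time weak/complete visibility algorithms of \cite{2,avis}: each reflex vertex $r$ of $\P$ lying between $ed$ and $target$ emits an extension line that carves $ed$ into $\mathcal{O}(|\SCR|)$ subintervals, and $I_d(ed)$ is the intersection of those subintervals from which every vertex of $target$ stays unblocked. Crucially, all these extension lines are already present in the decomposition produced by Definition~\ref{def:scr}, so no new geometric features need to be introduced.

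For the reflected case, I would iterate over the $\mathcal{O}(n)$ edges $e$ of $\P$ and, for each such $e$, apply the $e$-reflected-visibility procedure of \cite{tcs} restricted to the visible part of $e$. A clean way to organize this is to mirror $target$ across the supporting line of $e$, reducing reflected visibility between $ed$ and $target$ to direct visibility between $ed$ and the mirror image of $target$, clipped to the visible part of $e$; the same extension-line analysis then produces an interval $I_e(ed)$. The combined case is subtler: a point $p\in ed$ may cover $target$ by a mixture of direct and reflected sight-lines without belonging to any single $I_d$ or $I_e$. I would handle this by sweeping $p$ along $ed$ and maintaining, in a balanced search structure, the current partition of $target$ into the sub-arcs covered by each active sight-line type; whenever the structure certifies that the union of the active sub-arcs equals the entirety of $target$, $p$ lies in a valid combined interval.

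Once the intervals on $ed$ are known, the temp-sub-region is obtained by extending the extreme supporting sight-lines backward from the interval endpoints into $source$, reflected across $e$ where appropriate, so that each interval yields a constant number of temp-sub-regions whose bounding lines are already among the lines of Definition~\ref{def:scr}. For the time bound, the direct computation costs $\mathcal{O}(n|\SCR|)$ since each of the $\mathcal{O}(n)$ blocking vertices contributes a cut compared against the $\mathcal{O}(|\SCR|)$ features of $target$; the reflection phase repeats this over $\mathcal{O}(n)$ reflecting edges but contributes only additively in the analysis (each edge of $\P$ is touched once per target), and the balanced structure used in the sweep adds a $\log n$ factor, giving the claimed $\mathcal{O}(n|\SCR|\log n)$. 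I expect the main obstacle to be the combined case, because one must argue that the sweep correctly detects maximal intervals of $ed$ where a \emph{mixture} of direct and reflected rays covers $target$; the proof will lean on the general-position assumption inherited from \cite{cite:logopt} to rule out triple coincidences of sight-lines, and on a case analysis at each sweep event to show that the structure's coverage flag flips exactly at the boundaries of the true combined intervals.
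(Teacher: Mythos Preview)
Your three-case outline mirrors the paper's own proof, and the overall time budget lands in the right place, but the combined case hides the one nontrivial idea and your proposal does not supply it. When you sweep $p$ along $ed$ and track which sub-arcs of $target$ are currently covered, the coverage flag can flip at points that are \emph{not} endpoints of any interval handed to you by \cite{tcs} or by the direct-visibility computation: it flips precisely where the moving endpoint of the directly visible arc of $target$ coincides with the moving endpoint of an $e$-reflected-visible arc. The paper names these ``$dmvm$'' points and shows, by writing the two endpoint positions as linear fractional functions of the sweep parameter (one through the blocking reflex vertex $\rf$, the other through the relevant mirror endpoint $v$), that each such coincidence is the root of a single algebraic equation and can be solved in constant time. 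There are at most $\mathcal{O}(n)$ of them per target edge, which is what makes the $\mathcal{O}(n|\SCR|\log n)$ bound go through. Your balanced search structure cannot detect these flips on its own, because between the events you have listed the coverage changes continuously; you need the $dmvm$ points as explicit events, and your proposal only promises ``a case analysis at each sweep event'' without saying what those events are or how to locate them.

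A smaller correction: your claim that the backward half-lines bounding a temp-sub-region ``are already among the lines of Definition~\ref{def:scr}'' is false in the reflected case. Those half-lines are reflections of sight-lines through mirror endpoints and need not pass through any pair of vertices or intersection points used to build $\SCR$; the paper instead intersects them with the boundary of $source$ by a binary search in $\mathcal{O}(\log|source|)$ time. This does not affect the asymptotics, but you should not rely on the decomposition to furnish these lines for free.
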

\begin{proof}
Subsection~\ref{apen:subsec.algo.convex} in Appendix deals with the proof in details. 
\end{proof}

%%%%%%%%%%%%%%%%%%%%%%%%%%%%%%%%%%%%%%%%%%%%%%%%%%%%%%%%%%%%%%%%%%%%%%%%%%%%%%%%%%%%%%%%%%%%%%%%%%%%%%%%%%%%%%%%%%%%%%%%%%%%%%%%%%%%%%%%%%%%%%%%%%%
\subsection{Procedure {\bf Decompose(second-order-convex-region)}}
\label{sub.algo.intersect-summary}

This procedure is intended to decompose a given second-order-convex-region ($scr$) into guarding-regions. The complete version of this subsection is explained in Appendix subsection~\ref{sub.algo.intersect}.

According to Algorithm~\ref{algo.main}, when we reach this procedure, every second-order-convex-region is divided into some temp-sub-regions that probably share common areas. Line 12 counts every second-order-convex-region; line 13 calls Decompose to obtain guarding-regions inside a given second-order-convex-region. The Decompose procedure computes guarding-regions from the temp-sub-regions, so that the union of the guarding-regions inside a second-order-convex-region equals the given second-order-convex-region;  also, guarding-regions do not share any area (see Figure~$\ref{fig.decopposing.cases}$).

\begin{figure}[tp]%figure2
\begin{center}
\includegraphics[scale=0.4]{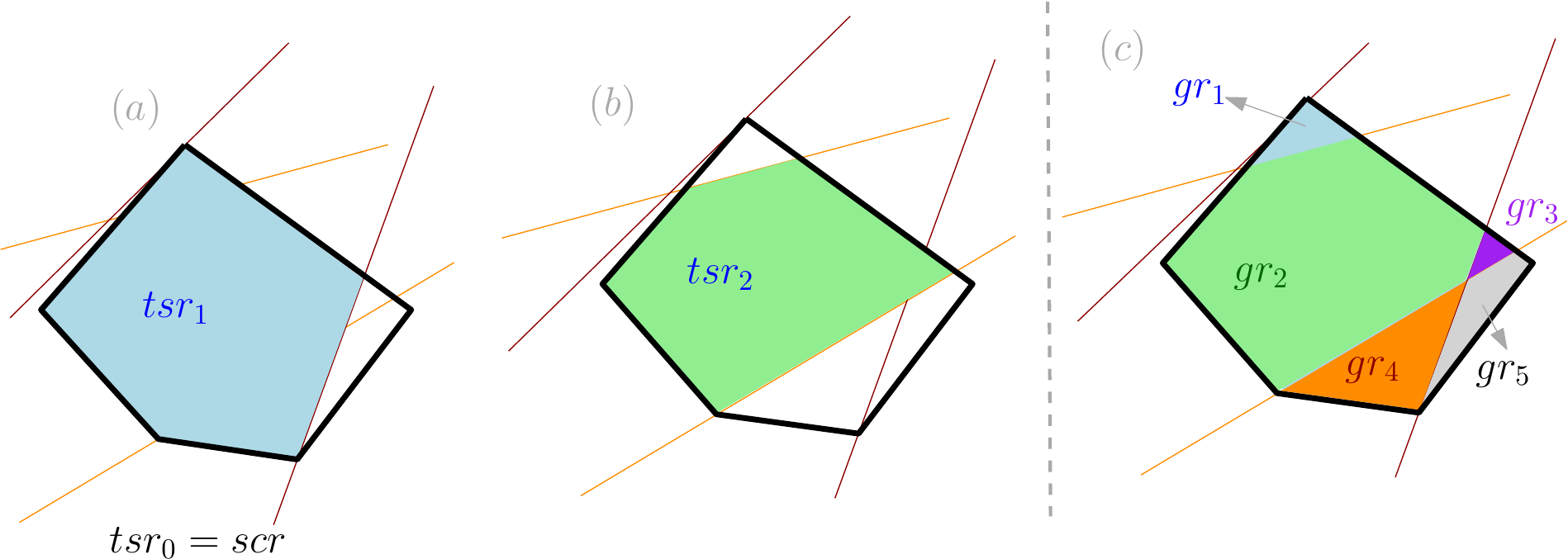}
\caption{This figure illustrates what decomposing a second-order-convex-region means. Three second-order-convex-regions are considered in this figure. The one denoted by $scr$, and two others are referred to as $scr_{1}$ and $scr_{2}$. Figure $(a)$ illustrates a temp-sub-region ($tsr_{1}$) in blue. Any guard inside $tsr_{1}$ can make $scr_{1}$ area-visible. Figure $(b)$ illustrates that $tsr_{2}$ can make $scr_{2}$ area-visible. Finally, figure~$(c)$ illustrates the  decomposition of this  temp-sub-region into four guarding-regions. 
In figure $(c) $, consider $gr_{4}$ as an example. The guarding-region of $gr_{4}$ is illustrated in orange, making $scr$ and $scr_{1}$ area-visible, but not $scr_{2}$.}
\label{fig.decopposing.cases}
\end{center}
\end{figure}

A temp-sub-region $tsr$ is obtained from the area between two half-lines (backward rays). 
We aim to decompose every second-order-convex-regions, $scr$, using their intersection with the area between these half-lines. 
Denote the number of temp-sub-regions inside $scr$, including $scr$, by $|scr|$, indicating the complexity of $scr$.
For each temp-sub-region $tsr$ in $scr$, except for $scr$ itself, denote a starting half-line by $shl(tsr)$, and an ending half-line by $ehl(tsr)$. So, inside $scr$, a temp-sub-region $tsr$ is between two half-lines $shl(tsr)$ and $ehl(tsr)$. We intend to sweep on $scr$, regardless of the direction we move on; for a simpler presentation, we refer to these half-lines $shl(tsr)$ and $ehl(tsr)$.
From the previous steps of Algorithm~\ref{algo.main}, for every two starting and ending half-lines, it is already specified what second-order-convex-regions ($\neq scr$) are visible to the points between any two half-lines $shl$ and $ehl$.

\begin{lemma}
\label{proof:decompose}
A second-order-convex-region can be decomposed into guarding-regions in $\mathcal{O}(|scr|\log(|scr|))$, where $|scr|$ denotes the complexity of a second-order-convex-region.
\end{lemma}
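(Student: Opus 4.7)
The plan is a sweep on $scr$ driven by the bounding half-lines of the temp-sub-regions inside it. The decisive observation is that, as a sweep point traverses $scr$, the subset of temp-sub-regions containing that point (and hence its visible-list $\VL$) can change only when the sweep crosses some $shl(tsr)$ or $ehl(tsr)$. On every open slice between two consecutive crossings, $\VL$ is constant, so each such slice is exactly one guarding-region and the slices together tile $scr$.

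First I enumerate the $O(|scr|)$ half-lines produced by the temp-sub-regions counted in $|scr|$ and sort them along the chosen sweep direction, which costs $\mathcal{O}(|scr|\log|scr|)$. Then I carry out the sweep, maintaining a balanced binary search tree of currently active temp-sub-regions keyed by the order in which their $ehl$'s will be encountered. At an $shl(tsr)$ event I insert $tsr$ and fold the second-order-convex-regions witnessed by $tsr$ into the running $\VL$; at an $ehl(tsr)$ event I do the reverse. Each update, together with the emission of the next guarding-region, costs $O(\log|scr|)$, so with $O(|scr|)$ events the whole sweep stays within $\mathcal{O}(|scr|\log|scr|)$.

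To avoid blowing this budget when outputting the visible-list of each guarding-region, I plan to represent $\VL$ incrementally: each guarding-region records only the delta with respect to its predecessor (the single insertion or deletion caused by the current event), so the representation cost per guarding-region is $O(1)$ and the total output respects the bound.

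The main obstacle will be proving that the arrangement of the $O(|scr|)$ half-lines restricted to $scr$ produces only $O(|scr|)$ guarding-regions rather than the $\Theta(|scr|^{2})$ faces one would obtain from an arbitrary segment arrangement. This is plausible because, by Definition~\ref{def:scr}, $\SCR$ is built by already cutting $\P$ along every line through pairs of vertices, reflex vertices, and their induced intersections; so any crossing of two half-lines strictly inside a single $scr$ would contradict the minimality of $scr$ as a second-order-convex-region, forcing all interior structure of $scr$ to be a monotone stack of half-lines that can be processed in the linear-event order above. Making this geometric argument fully rigorous, together with handling the degenerate cases where half-lines meet on $\partial scr$ or coincide with each other, is the part that will require the most care; once it is in place, the sorting plus the logarithmic-per-event sweep yield the claimed $\mathcal{O}(|scr|\log(|scr|))$ running time.
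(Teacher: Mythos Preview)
Your overall plan---a sweep on $scr$ with events at the bounding half-lines $shl(tsr)$ and $ehl(tsr)$ of the temp-sub-regions---is exactly the approach the paper takes (the paper sweeps parallel to an edge of $scr$, maintains a list of currently in-progress guarding-regions, and updates visible-lists at each half-line encountered). So the skeleton is fine.

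The genuine gap is your proposed justification for why there are only $O(|scr|)$ events and $O(|scr|)$ guarding-regions. You argue that two temp-sub-region half-lines cannot cross in the interior of $scr$ because Definition~\ref{def:scr} has already cut $\P$ along all relevant lines, so any such crossing would contradict minimality of $scr$. This argument fails: the half-lines bounding a temp-sub-region are \emph{not} lines from the $\SCR$ construction. They are the backward extensions, through an edge $ed$ of $scr_i$, of the rays produced by the Find procedure, and that procedure explicitly uses \emph{specular reflection} off edges of $\P$ (cases~2 and~3 of Find). A ray that has been reflected once is not, in general, a line through two vertices of $\P$ or through a reflex vertex and an intersection point, so there is no reason it should coincide with any edge of the second-order decomposition. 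Consequently two such backward half-lines can and do intersect strictly inside $scr$; the paper's own detailed description of Decompose (Appendix~\ref{sub.algo.intersect}) devotes its third case precisely to the event ``$\cal{SL}$ reaches an intersection-point $x$ of $k$ $shl$ half-lines and $k'$ $ehl$ half-lines,'' which directly contradicts your non-crossing assumption.

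Once crossings are allowed, your ``monotone stack of half-lines'' picture collapses, the number of faces is no longer linear in $|scr|$ a priori, and the $O(|scr|)$-event, $O(\log|scr|)$-per-event accounting does not go through as written. The paper's proof asserts the $\mathcal{O}(|scr|\log|scr|)$ bound directly from ``the complexity of a second-order-convex-region is $\mathcal{O}(|scr|)$'' without a face-count argument; if you want to make this rigorous you will need a different combinatorial bound than the non-crossing claim you proposed.
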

\begin{proof}
To decompose $scr$, we use a sweep-line denoted by $\cal{SL}$. We start sweeping parallel to an edge of $scr$. Various events may appear, $\cal{SL}$ checks every possibility and decide its action in each step. Appendix subsection~\ref{sub.algo.intersect} is concerned with the decomposition procedure. As the complexity of a second-order-convex-region is  $\mathcal{O}(|scr|)$, the sweeping approach may need $\mathcal{O}(|scr|\log(|scr|))$ time.
\end{proof}

\subsection{Analysis}
\label{analyze}
 Figure~\ref{fig.example.n} is an example showing that single specular reflection lowers the number of guards from $\Theta(n)$ to $1$ guard. So, reflection can help to eliminate a large number of guards. Of course, mirrors are cheaper than cameras. 

\begin{figure}[]%figure2
\begin{center}
\includegraphics[scale=0.6]{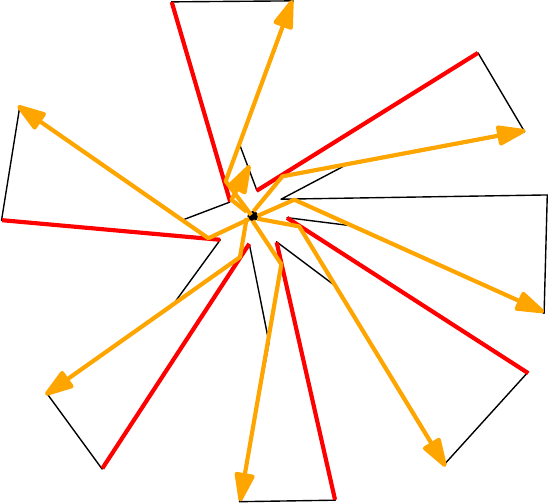}
\caption{This figure illustrates a situation where a single guard is required if we use reflection-edges; $\Theta(n)$ guards are required if we do not consider reflection. Red segments illustrate the reflected-edges.}
\label{fig.example.n}
\end{center}
\end{figure}

The rest of this section covers the analysis of Algorithm~\ref{algo.main}. First, note that due to line $5-12$ of the algorithm, we know that the guards chosen by the algorithm surely cover $\P$; so, the algorithm provides a feasible solution. That is because when $i=j$, every second-order-convex-region will be added to $\S$. Although each second-order-convex-region $scr$ is decomposed to temp-sub-regions, each of the resulting temp-sub-regions has $scr$ in its visible-list. So, all  second-order-convex-regions and consequently, $\P$ will be covered.

%%%%%%%%%%%%%%%%%%%%%%%%%%%%%%%%%%%%%%%%%%%%%%%%%

\begin{lemma}
\label{lem: polynomial}
Algorithm~\ref{algo.main} takes polynomial time complexity.
\end{lemma}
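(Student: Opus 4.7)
The plan is to bound the running time of each of the four steps of Algorithm~\ref{algo.main} separately and argue that each is polynomial in $n$.

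First, I would estimate $|\SCR|$ and the cost of Step~1. By Definition~\ref{def:scr}, sub-step~1 generates $\binom{n}{2}=\mathcal{O}(n^{2})$ lines through vertex pairs with $\mathcal{O}(n^{4})$ pairwise intersections; sub-step~2 then adds at most $\mathcal{O}(n)\cdot\mathcal{O}(n^{4})=\mathcal{O}(n^{5})$ new lines through reflex vertices and those intersections. The resulting line arrangement, restricted to $\P$, has polynomial combinatorial complexity, so $|\SCR|$ is polynomial in $n$ and can be built in polynomial time with any standard arrangement routine.

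Next, I would analyse Step~2. The outer and inner for-loops iterate $|\SCR|$ times each, and the middle loop iterates over the edges of $scr_{i}$, whose sum across all $scr_{i}$ is bounded by the total edge complexity of the arrangement and hence polynomial. Each iteration calls $\mathrm{Find}$, which by Lemma~\ref{lem:find} runs in $\mathcal{O}(n|\SCR|\log n)$. Multiplying these polynomial factors yields a polynomial total cost for Step~2. Step~3 iterates $|\SCR|$ times and invokes $\mathrm{Decompose}$; by Lemma~\ref{proof:decompose} each call costs $\mathcal{O}(|scr|\log|scr|)$, so summing over all second-order-convex-regions is again polynomial. Step~4 executes the greedy Set-Cover algorithm on $\S$, whose ground set (the second-order-convex-regions) and collection of sub-sets (the guarding-regions together with their visible-lists) both have polynomial size, so this step runs in polynomial time as well.

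The main obstacle I expect is keeping the bound on $|\SCR|$ honest. Each sub-step in Definition~\ref{def:scr} multiplies the number of lines by a polynomial factor, so one has to verify that the construction involves only a constant number of such rounds; otherwise the count could cascade super-polynomially. Once this is pinned down, every remaining step is a clean composition of the lemmas already stated and the standard analysis of greedy Set Cover, and the overall polynomial bound follows by summing the four contributions.
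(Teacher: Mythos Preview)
Your proposal is correct and follows essentially the same route as the paper: bound $|\SCR|$ by counting the $\mathcal{O}(n^{2})$ vertex--vertex lines, their $\mathcal{O}(n^{4})$ intersections, and the $\mathcal{O}(n^{5})$ lines through reflex vertices, then invoke Lemmas~\ref{lem:find} and~\ref{proof:decompose} for Steps~2 and~3 and the standard greedy Set-Cover analysis for Step~4. The only difference is cosmetic: the paper plugs in explicit exponents (arriving at $|\SCR|\in\mathcal{O}(n^{10})$, $|\S|\in\mathcal{O}(n^{20})$, and an overall $\mathcal{O}(n^{41}\log n)$ bound for lines~5--12), whereas you keep the argument at the level of ``polynomial in $n$''; since the paper itself stresses that only polynomiality matters, your level of detail is adequate.
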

\begin{proof}
A simple analysis reveals that the algorithm works in the polynomial time: 
$m$ lines can have at most $m^{2}$ intersections. The polygon has $n$ vertices. First, we find the lines formed by joining vertices. We have $\mathcal{O}(n^2)$ lines, and $\mathcal{O}(n^{4})$ intersection-points. We connect these intersection-points to $\mathcal{O}(n)$ reflex-points, creating $\mathcal{O}(n^{5})$ lines. So, there are $\mathcal{O}(n^{10}$ intersection points, providing the final list of second-order-convex-regions. The list of second-order-convex-regions contains $\mathcal{O}(n^{10})$ elements. In each second-order-convex-region, there are at most $\mathcal{O}(n^{10})$ temp-sub-regions. So, in the worst case, there could be $\mathcal{O}(n^{20})$ guarding-regions in $\S$.

A second-order-convex-region has $\mathcal{O}(n^{10})$ edges. That is, $|\SCR| \in \mathcal{O}(n^{10})$.
~A sweeping line on temp-sub-regions in a second-order-convex-region takes at most $\mathcal{O}(n^{10}\log(n))$ time. As the algorithm presented in \cite{tcs} needs $\mathcal{O}(n)$ time, the Find procedure takes $\mathcal{O}(n^{11}\log(n))$ time to find temp-sub-regions inside a second-order-convex-region.
A decomposition process is one time sweeping on a second-order-convex-region, resulting in splitting the surface of that second-order-convex-region into several guarding-regions; so, this procedure does not take more time than $\mathcal{O}(n^{10})\log (n)$.

All in all, considering Algorithm~\ref{algo.main} lines of five to twelve takes $\mathcal{O}(n^{41}\log n)$ time in a worst-case analysis.
Moreover, lines thirteen to seventeen take $\mathcal{O}(n^{20}\log n)$ time.

So, the algorithm works in the polynomial time, and this is what matters for us. This is an upper bound; however, whether the complexity can be improved remains open.
Decreasing the complexity of the second-order-convex-regions has a great impact on the complexity of the algorithm. As we do not count on partial visibility; any change must be matched with the approximation proof to ensure it does not affect the analysis. 
\end{proof}

\subsubsection{Approximation Analysis}
\label{sec:approximation-analysis}
This section proves that the approximation factor of Algorithm~\ref{algo.main} is $\log n$. As we already know that this approximation is the best known considering the standard \emph{point} guarding problem in the worst case, we can conclude that it is a nice approximation ratio for the generalized problem.  

%%%%%%%%%%%%%%%%%%%%%%%%%%%%%%%%%%%%%%%%%%
%%%%%%%%%%%%%%%%%%%%%%%%%%%%%%%%%%%%%

 Theorem~\ref{theo.proof.opt} claims that
Algorithm~\ref{algo.main} provides a $\log n$-approximation for the AG+R problem. In the following, we prove this theorem.

\begin{proof}
\emph{All through the proof we only count on direct visibility;
however, reflection may help the visibility of a guard, making the situation even better.}

The set $\S$ contains $|\S|$ guarding-regions every one of which, say $gr$, has a list of second-order-convex-regions (visible-list) $\VL(gr)$; so, if we put a guard $g$ in an arbitrary position in $gr$, $g$ can see all second-order-convex-regions in $\VL(gr)$ entirely. However, the guard $g$ might see some other second-order-convex-regions partially. Since Algorithm~\ref{algo.main} only counts on the second-order-convex-regions that are entirely visible (area-visible), it does not add the partially visible part of the  second-order-convex-region to a visible list of a guarding-region.
These partially visible areas cause a difference between the solution we provide by Algorithm~\ref{algo.main} and the optimal one. That is because one or more guards in the optimal solution might see some second-order-convex-regions $scr$s partially; the rest of $scr$s gets visible by other guards.

To prove Theorem~\ref{theo.proof.opt}, we have two steps; first, considering that we have a set $\S$, we need to select the minimum number of sub-sets (guarding-regions) from that to cover all second-order-convex-regions. The first step is an instance of the Set Cover problem, which has a $\log (|\SCR|)$-approximation algorithm. 
To organize our instance of the Set Cover problem, we consider $\SCR$ as the set of items. So, all second-order-convex-regions should be covered. The visible-list of every guarding-region corresponds to a subset of our instance problem. The minimum number of guarding-regions whose visible-list covers all second-order-convex-regions is the solution of this instance of the Set Cover problem.

Secondly, we need to know how well $\S$ is.
To measure how good $\S$ is, we need to compare the optimal solution of the Art+R problem, denoted by $\OPT^{*}$, and the optimal solution we have from $\S$, denoted by $\OPT(\S)$. 
Denote $\OPT^{*}$ as the set of guards in the real optimal solution. Suppose there are $|\OPT^{*}|$ guards in $\OPT^{*}$. Denote $|\OPT(S)|$ as the number of guarding-regions in $\OPT(\S)$.
In the following lemma, we prove that $|\OPT(\S)|\leq$ $\frac{3}{2} |\OPT^{*}|$. So, Algorithm~\ref{algo.main} is a $O(\log (|SCR|))$-approximation for the Art+R problem.

Since from Lemma~\ref{lem:polynomial}, $|\SCR|$ is polynomial, we can conclude that Algorithm~\ref{algo.main} provides a $\log (n)$-approximation factor for the Art+R problem, where $n$ indicates the complexity of $\P$.
\begin{lemma}
\label{lem:2-app}
$|\OPT(\S)|$ $\leq \frac{3}{2}$ $|\OPT^{*}|$
\end{lemma}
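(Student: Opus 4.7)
The plan is to start from any optimal Art+R solution $\OPT^{*} = \{g_1,\dots,g_k\}$ (with $k=|\OPT^{*}|$) and exhibit a feasible family of guarding-regions $\mathcal{G}\subseteq\S$ of size at most $\frac{3}{2}k$. Since $\OPT(\S)$ is by definition the smallest such family, this immediately yields $|\OPT(\S)|\le|\mathcal{G}|\le\frac{3}{2}|\OPT^{*}|$. Throughout, I follow the proof's declared convention of only charging direct visibility to $\OPT^{*}$'s guards; any reflective contribution only strengthens the inequality.

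To build $\mathcal{G}$, I first replace each $g_i$ by the unique guarding-region $gr_i\in\S$ containing it. By the defining property of a guarding-region, every second-order-convex-region in $\VL(gr_i)$ is area-visible from $gr_i$, so each such $scr$ is covered. Let $R$ denote the set of second-order-convex-regions still uncovered after this pass: by construction, no single guard in $\OPT^{*}$ sees any $scr\in R$ entirely, so $\OPT^{*}$ must cover each $scr\in R$ only by combining partial contributions from at least two of its guards. To repair the cover, for every $scr\in R$ I place one guarding-region $gr^{scr}$ inside $scr$; by the convexity built into Definition~\ref{def:scr}, $scr\in\VL(gr^{scr})$ automatically. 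This produces $\mathcal{G}$ of size $k+|R|$.

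It remains to prove $|R|\le k/2$. The natural route is through a bipartite charging graph $H=(\OPT^{*}\cup R,E)$, where $g_i$ is joined to $scr$ whenever $g_i$ partially covers $scr$ in $\OPT^{*}$. Every vertex of $R$ has degree at least two in $H$, and the goal is to match each $scr\in R$ to a distinct pair of its neighbours, yielding $2|R|\le k$ and hence $|\mathcal{G}|\le k+\tfrac{k}{2}=\tfrac{3}{2}k$.

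The hard step is precisely this matching. A priori there is no constant-degree bound on the $\OPT^{*}$ side — a single guard could in principle partially cover many residual second-order-convex-regions simultaneously. The argument must exploit the specific decomposition of Definition~\ref{def:scr}: since every line through a pair of vertices, and every line through an intersection point and a reflex vertex, already appears as a boundary, the loci along which visibility transitions are precisely the second-order-convex-region boundaries themselves. I expect this to supply enough planar structure — either via a Hall-type condition, or by perturbing $\OPT^{*}$ to an equivalent optimum placed in ``general position'' with respect to these boundaries — to guarantee the required disjoint pairing. This is where the slack in the constant $\frac{3}{2}$ (rather than something tighter) is most likely being absorbed: it lets a pairing argument tolerate a small amount of sharing rather than forcing a per-guard degree bound of exactly one.
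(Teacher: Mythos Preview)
Your framework matches the paper's: replace each $g_i\in\OPT^*$ by its containing guarding-region, identify the residual set $R$ of second-order-convex-regions covered only through combined partial visibility, add one guarding-region per element of $R$, and bound $|R|\le k/2$. Where your proposal stalls is exactly the step you flag as ``hard'': you speculate about a Hall-type condition or a perturbation argument but do not supply one, and the generic observation that every vertex of $R$ has degree $\ge 2$ in your bipartite graph $H$ cannot by itself force $2|R|\le k$ (a single guard could, a priori, be adjacent to many residual regions).

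The paper closes this gap with a structural claim you do not isolate: \emph{no pair of guards from $\OPT^*$ can jointly partially cover two or more distinct second-order-convex-regions}. The argument uses precisely the ``second order'' step of Definition~\ref{def:scr}. If $g_i$ and $g_j$ each see both $scr'$ and $scr''$ only partially, then each partial view is bounded by a line through the blocking reflex vertex and some vertex or first-order intersection point. The two such lines bounding $g_i$'s view of $scr'$ and of $scr''$ cannot be parallel (since $g_i$ lies in both half-planes), so they meet in a point $p$; the decomposition already connects $p$ to every reflex vertex, so the line through $p$ and the relevant reflex vertex is itself an SCR boundary. That line then subdivides $scr'$ (and likewise $scr''$), contradicting minimality. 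Hence any two guards of $\OPT^*$ are jointly responsible for at most one residual region, and the paper reads off ``two guards in $\OPT^*$ $\rightarrow$ three guards in $\OPT(\S)$'' to obtain the factor $\tfrac{3}{2}$. Your matching instinct is in the right direction, but the concrete mechanism is this pairwise exclusion coming from the second-order construction, not a Hall argument or a general-position perturbation.
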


\begin{proof}
Every guard $g_{opt}$ in $\OPT^{*}$ covers a sub-area of $\P$ denoted by $sa(g_{opt})$, which is, in fact, the visibility polygon of $g_{opt}$.
Pick an arbitrary point $p$ in a guarding-region $gr$. Set $g_{s}$ to be in the position specified by $p$. The guard $g_{s}$ can see a sub-area of $\P$ denoted by $sa(g_{s})$.
The proposed algorithm tries to estimate $sa(g_{s})$ with a sub-set of second-order-convex-regions. This sub-set is determined by the visible-list of a guarding-region ($\VL(gr)$). 
Choosing a guarding-region in $\S$ corresponds to selecting a position for a guard.

Clearly, the guards in $\OPT^{*}$ also cover all  second-order-convex-regions.
Remember that for every guard, Algorithm~\ref{algo.main} counts only on the complete visibility of second-order-convex-regions. If a subset of guards in $\OPT^{*}$ sees a set of second-order-convex-regions via complete visibility, and a larger set in $\OPT{\S}$ covers the same second-order-convex-regions, then we can choose the same guards form $\OPT^{*}$ to be in $\OPT(\S)$. Otherwise, $|\OPT^{*}|$ and $|\OPT(\S)|$ would have the same number of guards.

So, we only have to care about situations where at least one second-order-convex-region is covered by some previously chosen guards in $\OPT^{*}$, so that this second-order-convex-region is covered by the union of the partial visibility of some guards in $\OPT^{*}$. In $\OPT(\S)$, we have to count on extra guards to cover partial visible second-order-convex-regions completely.
We will see that since reflex vertices are connected to the intersection points from the criss-cross of the lines of the polygon, situations in which a few guards in $\OPT^{*}$ can see many second-order-convex-regions partially cannot happen. However, depending on the positions reflex vertices and guards may happen to choose, a few cases are investigated for the proof to be complete. See Appendix~\ref{sub;proof} for the details of the proof.
\end{proof}
\end{proof}

%%%%%%%%%%%%%%%%%%%%%%%%%%%%%%%%%%%%%%%%%%%%%%%%%%%%%%%%%%%%%%%%%%%%%%%%%%%%%%%%%%%%%%%%%%%%%%%%%%%%%%%%%%%%%%%%%%%%%%%%%%%%%%%%%%%%%%%%%%%%%%%%%%%%%%%%%%%%%%%%%%%%%%%%%%%%%%%%%%%%%%%%%%%%%%%%%%%%%%%%%%%%%%%%%%%%%

\section{Discussion}
  The paper generalizes the classical Art Gallery problem by allowing “specular reflections”, considering them only once. Similar to the classical version, this one also has a $\log$-space approximation, by applying a transformation to the well-known Set Cover problem.
  
Since some years ago, scientists have considered reflection in their problem-solving. 
In this article and some previous works, we have considered reflection as an extra capability to enhance other problems. 
The question of whether $\mathcal{O}(\log n)$ is the best possible approximation factor for the Art+R problem, however, remains open. 
However, we know that $\log n$  is the best-known approximation factor for the normal \emph{point} guarding problem in the worst case.

Even though multiple reflections may seem more interesting from a theoretical point of view, in practice, each mirror lowers the quality of the pictures, especially when dealing with a large gallery.  From a more practical point of view, we might tend to consider placing only a few (small) mirrors somewhere (as mirrors are expensive and we use a building as an Art Gallery) to reduce the number of guards. This algorithm helps us to determine which edge and which part of that edge can provide a larger view. If a user has a fixed camera, they can use the algorithm proposed by \cite{tcs}. Otherwise, the algorithm presented in this paper specifies the positions of guards and the mirrors together.

 Although the main contribution of this paper is to consider reflection, this algorithm also works for the standard Art Gallery problem. This can be done by changing the Find function only to consider direct visibility. The modified algorithm converts the point guarding problem to a version of the Set Cover problem; however, the sub-sets (the guarding-regions) are not independent, and this is not a general case of the Set Cover. We intend to find the relations between the guarding-regions and provide a better approximation factor for the standard point guarding problem in the future work.

%---------------------------- Appendix -------------------------------

\newpage
\section*{Appendix}
\label{appendix}
\subsection{{\bf Procedure Find (a segment, a second-order-convex-region)}}
\label{apen:subsec.algo.convex}
This section deals with the various cases of the Find procedures in details.

1) \emph{An interval on $ed$ can see the entire target {\bf directly}}:

We need to find the largest interval $I(ed)$ that can make the target area-visible. To do this, since we know that the target is a convex-region, we only need to make sure that all vertices of the target are visible for all points of $I(ed)$. The intersection of the visibility of all target's vertices points out an interval on $ed$ that is denoted as $I(ed)$. 
In other words, locate the two extreme points of $ed$ visible from each vertex of $target$. Then, find their common intersection.

We use $I(ed)$ to find a temp-sub-region, $tsr$, that can see the target. We add $tsr$ to \emph{the output list} of the Find procedure.
    As $I(ed)$ may get larger by the help of reflection, we have to check other cases too.
    However, this case obviously only takes $\mathcal{O}(|\SCR|)$.

 2)   \emph{An interval of $ed$ can see $target$ entirely {\bf via reflection}:}
 
We use the algorithm in subsection $5.3$ of \cite{tcs}. This algorithm takes a segment as an input denoted by $\seg{xy}$;  we need to choose two specific rays from four ones to make sure that a target gets completely visible to the viewer. The chosen rays specify two half-lines denoted by $L_{y}$ and $L_{x}$; starting from the endpoints of $\seg{xy}$, they will start from a reflector and point to the target. The chosen half-lines are denoted by numbers $1$ or $2$.  %%; to use that algorithm, we 

To use that algorithm, set $ed$ to be $\seg{xy}$ as the viewer, and use $L_{2,y}$ and $L_{1,x}$ half-lines to make sure that $target$ is \emph{completely} reflected-visible to $I(ed)$. These half-lines must be released from $ed$ to reach $target$ after reflecting from a reflection-edge $e$ in the middle. The opposite direction from $e$ to $ed$ makes two half-lines to intersect $ed$ on the endpoints of $I(ed)$. 
There could be other such edges like $e$ that are used with $ed$ to make $target$ reflected-visible; however, we need to take all such edges into account and find the largest $I(ed)$. This can be done by the mentioned algorithm, as presented by \cite{tcs}. After finding the correct half-lines that make the endpoints of $I(ed)$, we extend these half-lines inside the source. The intersection between these two half-lines and the source creates a temp-sub-region $tsr$. By using the binary search, we can compute this intersection in $\mathcal{O}(\log (|source|))$ time, where $|source|$ denotes the complexity of the source. Note that any point in $tsr$ can make both target and source area-visible.

The algorithm we used from \cite{tcs} could be only useful if we were looking for a segment to get visible to a source segment; however, here our target is a convex-region, not a segment. Suppose that $target$ has $k$ edges; to be sure the whole $target$ is reflected-visible to the given segment (the viewer), we have to take each edge of $target$ into account individually. 

Consider all edges of the $target$ second-order-convex-region. To compute the final result, we compute the intersection between every two temp-sub-regions corresponding to two edges of the $target$ region in a step by step manner to find the final temp-sub-region.

Denote the complexity of the target by $|target|$. For a specific edge $ed$, to compute a temp-sub-region $tsr$,  it takes $\mathcal{O}(|target|)$ time to compute $I(ed)$; and it takes $\mathcal{O}(\log |source|)$ to compute the boundary of $tsr$.  Thus, it takes $\mathcal{O}(|target|+\log |source|)$ for each run of the Find procedure. Later, we will see that that the complexity of a second-order-convex-region is polynomial in terms of $n$; so, the Find procedure takes $\mathcal{O}(|target|)$ time (considering this case).

Note that the algorithm presented in \cite{tcs} considers all  edges of $\P$ as potential reflection-edges. So, if the $target$ gets covered by different reflection-edges, we surely consider all such reflection-edges. 

 3) \emph{Points on an interval $I(ed)$ on $ed$ might see a part of $target$ directly, and the rest of the $target$  is done via reflection.}

To compute these types of intervals on $ed$, we use the following approach. 
The mentioned approach, as represented below, only considers a source segment $ed$ and a target \emph{segment} denoted by $\seg{uw}$. We need to run this approach on all edges of the $target$ second-order-convex-region and consider the intersection of the computed intervals on $ed$. As a result, we know that every point of the final intervals on $ed$ can see the entire $target$ region. Note that in this case, as various mirror-edges can create different reflected-visible parts on the target by considering direct visibility, there can be various intervals on the source that can make the target area-visible. So, there might be more than one interval on $ed$ that can make the entire area visible.

This third case deals with some special situations where there are some points on the target, such that each one is a place where direct visibility meets the reflected visibility.
For example, see Figure~\ref{fig.sweep-ref}(c). This figure illustrates the target by $\seg{uw}$. The point $p_{1}$ on $ed$ sees the target completely. If we move a little bit on $ed$ from $p_{1}$ and go to $p_{0}$,  the direct visibility gets disconnected from the $e_{1}$-mirror-visible part. We need to compute points like $p_{1}$ or its image on $\seg{uw}$, which is denoted by $q$, to know what interval on $ed$ can see the entire target. This is because $p_{1}$ and such points determine an endpoint for these intervals. Note that computing $p_{1}$ or $q$ is equivalent. That is because $q$ is the projection of $p_{1}$ on the endpoint of the mirror-edge.

Using the algorithm presented in subsection 5.4 of the article \cite{tcs}, we can compute which interval of $ed$ can see which part of $\seg{uw}$ in a weak visible way. To explain more, see Figure~\ref{fig.sweep-ref}(e). We can find out that the points on $ed$ form $p_{0}$ till $p_{3}$ can see at least one point on $\seg{uw}$ from $w$ to $f$ using $e_{1}$.  
However, if we compute direct visibility in $p_{3}$, we will see that direct visibility and $e_{1}$-mirror-visibility share a part on $\seg{uw}$. This part has been illustrated in Figure~\ref{fig.sweep-ref}(e) in green. As we move on $ed$ from $p_{3}$ to $p_{1}$, the direct visible interval gets shorter directed to $u$; also, $e_{1}$-mirror-visible part gets shorter directed to $w$ till the endpoints of these intervals meet each other in $q$.

We have the endpoints of intervals such as $\seg{p_{0}p_{3}}$ form the algorithm previously mentioned in the article \cite{tcs}, but we do not have points such as $p_{1}$; so, we should compute these types of points. The reason we have to compute these points is because they determine the endpoints of visibility intervals on the source. To have a better presentation, let call the points on the source ($ed$); the image on the target is a point where direct visibility and mirror-visibility are merged by $dmvm$. Note that for every mirror-edge, there can be one $dmvm$. Also, a $dmvm$ point might not see the target completely, but it may see the target by getting help from other mirrors. So, we have to compute the $dmvm$ points no matter whether it can see the target entirely or not. We will check that later.
\begin{definition}
A $dmvm$ point on the source is a point where the direct visibility of $dmvm$ and its mirror-visibility on an endpoint of a predetermined mirror-edge are merged on a target which is a segment. In other words, there is a segment as a source and a segment as a target; if we stand on the $dmvm$ point on the source, we can see a direct visible part and a mirror-visible part for a mirror-edge, say $e$. The $e$-mirror-visible part and the direct visible part share one of their endpoints. This endpoint is the projection of $dmvm$ on the target.
\end{definition}
\begin{figure}[t]%figure8
\begin{center}
\includegraphics[scale=0.5]{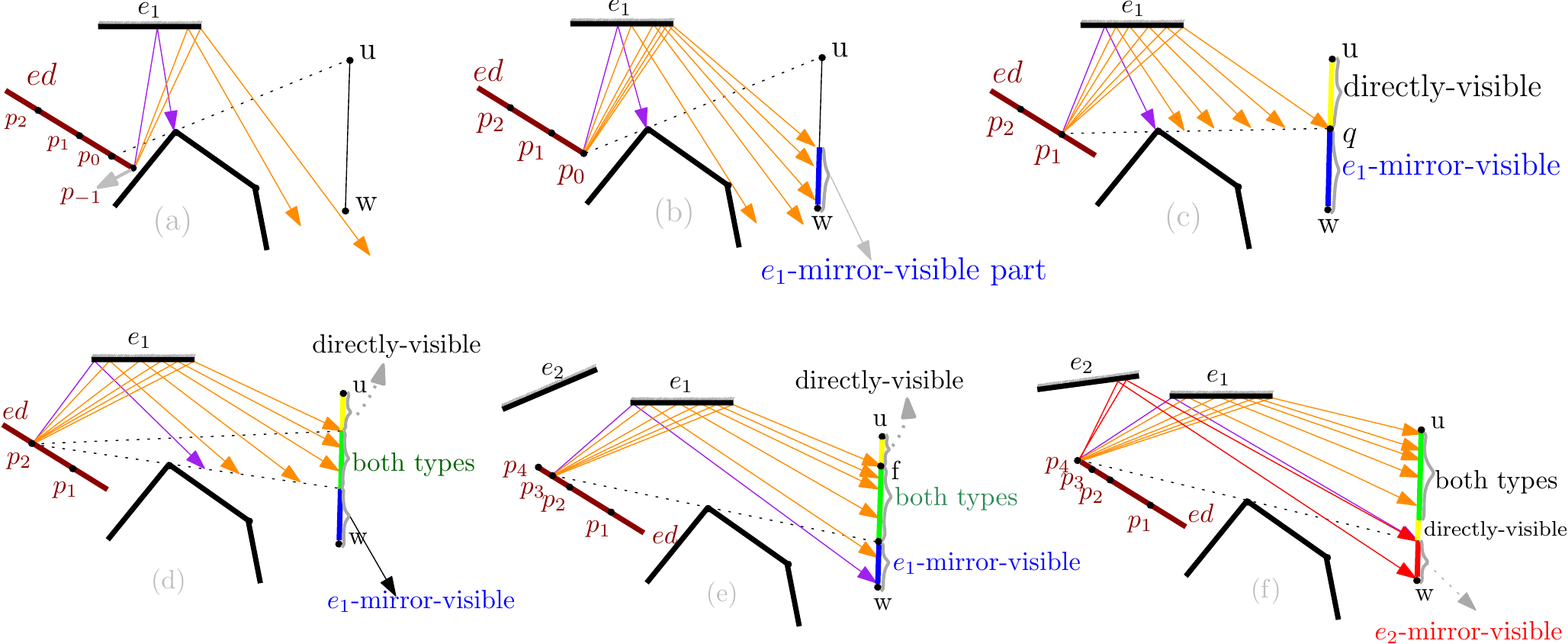}
\caption{Start sweeping from $p_{-1}$ upward on $ed$. a) $p_{-1}$ cannot see any point on $\seg{uw}$. b) However, if we move a little bit, the points on $ed$ can see $\seg{uw}$ partially through $e$ via reflection, but nothing is visible directly. c) The direct-visibility starts to begin on $\seg{uw}$, while we sweep upward on $ed$ till the part  of $ed$ that makes direct-visibility reaches the interval on $ed$ which makes $e$-reflected-visibility on $\seg{uw}$. The point $p_{1}$ is where the points on $ed$ start to see $\seg{uw}$ entirely. This complete visibility involves both reflected-visibility and direct visibility. d) The point $p_{2}$ also lies on the interval of $ed$ that can see the entire $\seg{uw}$ segment via both types of visibility. However, $p_{2}$ can see a common part either directly or via reflection. As we move upward on $ed$, the direct visible interval of $\seg{uw}$ gets larger from $u$ to $w$.  The direct visibility may disappear eventually. Note that in the other direction not used here, 
the reflected-visibility may disappear from a point, while we sweep on the source segment. So, there may be such an event for a point where the reflected-visibility disappears. e) The point $p_{3}$ is where the points on $ed$ stop seeing $\seg{uw}$ via reflection-edge $e_{1}$.
 f) The point $p_{4}$ is a point that can see $\seg{uw}$ using two reflection-edges, or via $e_{2}$, as well as direct visibility.}
\label{fig.sweep-ref}
\end{center}
\end{figure}

\begin{lemma}
For a given segment as a source and a given segment as a target, we can compute all $dmvm$ points with regard to all mirror-edges in linear time, considering the complexity of $\P$. 
\end{lemma}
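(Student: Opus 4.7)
The plan is to parametrize a point $p$ sliding along the source segment and to track, for every candidate mirror edge $e$ of $\P$, how the direct-visible interval $D(p)$ and the $e$-reflected-visible interval $R_e(p)$ on the target segment $\seg{uw}$ evolve. By definition, the $dmvm$ associated to $e$ is precisely the parameter $p$ at which a relevant endpoint of $D(p)$ coincides with a relevant endpoint of $R_e(p)$; the shared point on the target is the projection of $dmvm$. Hence it suffices, for each mirror edge $e$, to detect the unique parameter where those two endpoints collide, or to certify that no such parameter exists.

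First I would compute, in linear time via the weak-visibility algorithm of \cite{2}, the direct weak-visibility information of the source onto the target, obtaining an ordered list of $\mathcal{O}(n)$ breakpoints at which an endpoint of $D(p)$ changes its bounding feature (a reflex vertex of $\P$, or the endpoint $u$ or $w$). Next, for each mirror edge I would invoke the weak mirror-visibility procedure of subsection 5.4 of \cite{tcs}, which, by the linear-time bound stated there, yields for all mirror edges together an $\mathcal{O}(n)$-size list of breakpoints at which an endpoint of $R_e(p)$ changes its bounding feature (an endpoint of $e$, or a reflex vertex whose image under reflection in $e$ becomes aligned). Merging these two event streams partitions the source into sub-intervals on which the endpoints of $D(p)$ and $R_e(p)$ are each a fixed algebraic function of $p$.

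On any such sub-interval the candidate $dmvm$ for $e$ is the solution, if any, of the equation obtained by equating the relevant endpoint of $D(p)$ with the relevant endpoint of $R_e(p)$. Under the general-position assumption recalled in the introduction, this equation has bounded degree and at most one admissible root, so it can be solved in $\mathcal{O}(1)$ time; after solving I would verify that the root lies inside the sub-interval and is consistent with the correct endpoint pairing, discarding it otherwise. Because the lists of events above have total size $\mathcal{O}(n)$ across all mirror edges, only $\mathcal{O}(n)$ such constant-time checks are performed, giving the claimed linear running time.

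The principal obstacle is the book-keeping needed to guarantee the $\mathcal{O}(1)$-per-sub-interval cost: one must argue that between two consecutive events the bounding features of both the direct and the $e$-reflected intervals are frozen, so that ``endpoints meet'' reduces to a constant-size algebraic equation, and that the aggregate number of sub-intervals across all mirror edges does not blow up beyond linear. Both points rely on the monotone sliding behaviour of reflected-visibility intervals along a segment already exploited in \cite{tcs}, combined with the general-position hypothesis, which rules out simultaneous coincidences of breakpoints for distinct mirror edges.
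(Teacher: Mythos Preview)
Your argument is correct in outline and reaches the same linear bound, but it is organized differently from the paper's proof and carries more overhead than necessary.

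The paper does not sweep at all. It observes that, for a fixed mirror edge $e$, the output of the procedure in \cite{tcs} already identifies the specific mirror endpoint (call it $v_2$) and the specific reflex vertex $\rf$ that govern the boundary between the $e$-reflected-visible part and the directly visible part of the target. With those two points in hand, the $dmvm$ is obtained by writing the unknown distance $d$ along the source from a known anchor point $p_3$ (also supplied by \cite{tcs}), expressing the projection $q$ on the target twice---once via the line through $p_1=p_3+d$ and $\rf$, once via the reflected ray through $v_2$---and equating the two expressions. This is a single constant-size algebraic system per mirror edge, so the total work beyond the linear-time call to \cite{tcs} is $\mathcal{O}(n)$ with no further accounting.

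Your route instead partitions the source into maximal sub-intervals on which the bounding features of both $D(p)$ and $R_e(p)$ are frozen, and solves one equation per sub-interval. That is sound, but it forces you to argue the aggregate $\mathcal{O}(n)$ bound on sub-intervals across all mirror edges, which you flag as the ``principal obstacle.'' The paper sidesteps this entirely because \cite{tcs} hands back exactly one $(\rf,v_2,p_3)$ triple per mirror edge, so there is nothing to merge. Your framework is more general---it would cope gracefully if the dominant reflex vertex changed while the $e$-reflected interval was still active---but for the statement as claimed the paper's direct per-edge computation is both simpler and avoids the combinatorial bound you would otherwise need to justify.
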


\begin{proof}
Without the loss of generality, we only deal with one mirror-edge and direct visibility. 
Now, see Figure~\ref{fig:proof}. Without loss of generality, we deal with one case in Figure~\ref{fig:proof}, and we can compute the $dmvm$ point denoted as $p_{1}$ through algebraic equations. 
Since from subsection 5.4 of the article \cite{tcs}, we have $p_{3}$, we can compute the intersection of the ray coming out of $p_{3}$ and reflecting on $v_{2}$. This point is denoted by $f$. Note that as direct visibility is blocked partially, there must be a dominant reflex vertex here; it is denoted by $rf$. The intersection of the line passing  through $p_{3}$ and $rf$ gives us the end of  the directly visible part seen by $p_{3}$ on $\seg{uw}$.  This point is denoted by $h$. The position at which the direct visibility meets mirror-visibility on $\seg{uw}$ is denoted by $q$. In fact, $q$ is the projection of $p_{1}$ on $\seg{uw}$. By definition, $q$ is seen by both direct visibility and mirror-visibility. The distance between $p_{3}$ and $p_{1}$ is called $d$. The only thing we need to compute is $d$. This is because by knowing $d$, we can compute the coordinates of $p_{1}$ and $q$. As mentioned, $q$ must be somewhere between $f$ and $h$. So, set $d$ as a variable and compute the coordinates of $p_{1}$ from the coordinates of $p_{3}$, and in terms of $d$. The line crossing $p_{1}$ on $rf$ must intersect $q$. So, compute the coordinates of $q$ in terms of $d$ too. Also, checking mirror-visibility and knowing the fact that a ray reflecting on $v_{2}$ must cross $q$, we can come up with another equation. By solving these equations, we can compute $d$. Thus, $p_{1}$ is obtained.

\begin{figure}[t]%figure8
\begin{center}
\includegraphics[scale=0.7]{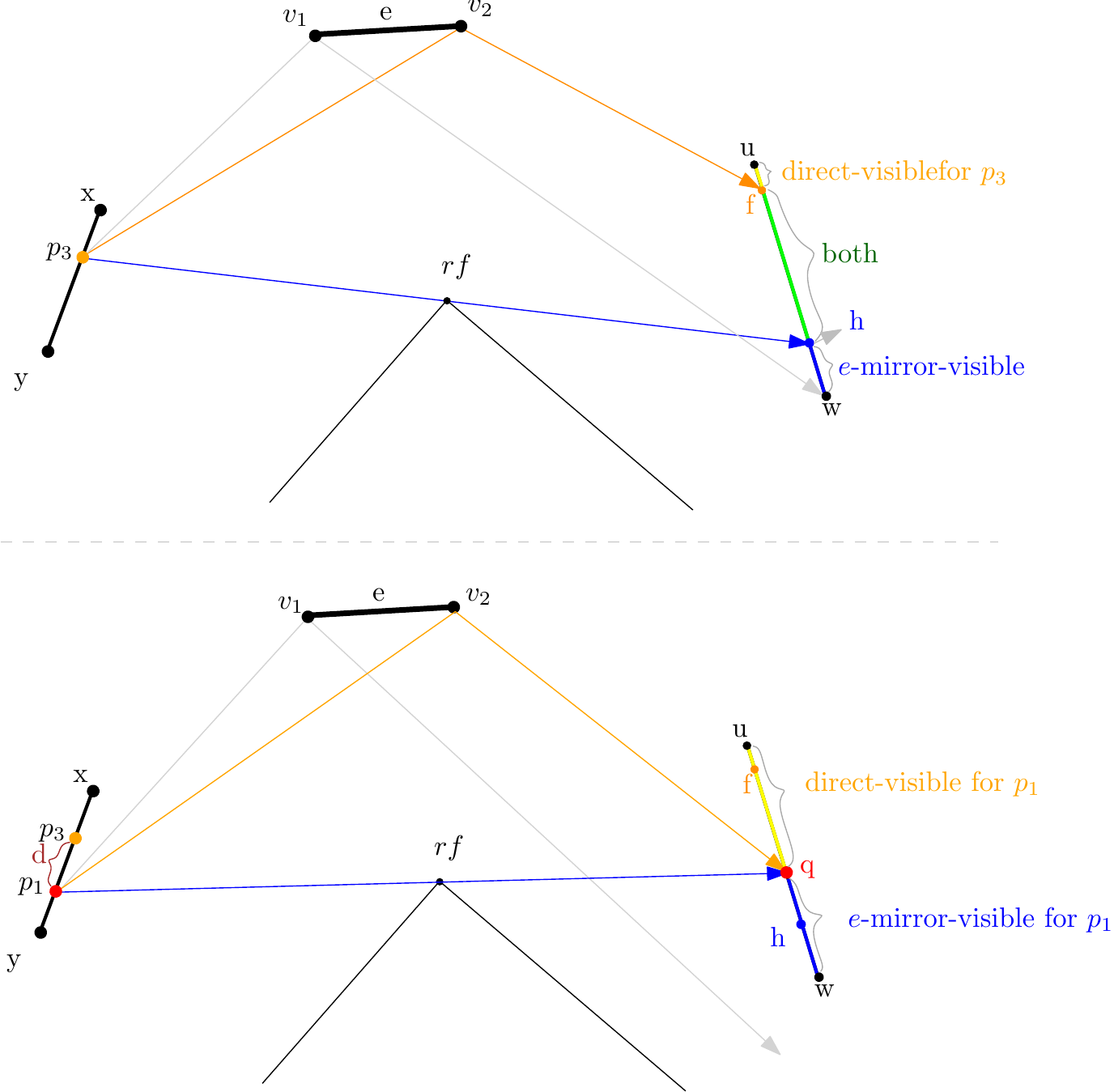}
\caption{For every mirror-edge, we can compute where direct visibility and mirror-visibility merge. To do this, we only have to solve a few algebraic equations.}
\label{fig:proof}
\end{center}
\end{figure}

As we consider any case  in which we have to merge direct visibility with a mirror-visibility and we have the corresponding reflex vertices and the endpoints of the mirror that have impact on the visibility, we can compute how to move on source to aim a specific position. Then, through algebraic equations, we can find $dmvm$ points. Note that the endpoints of mirror-edges (denoted by $v_{1}$ and $v_{2}$) and the corresponding reflex-vertices (denoted by $\rf$) are computed from the algorithm presented in \cite{tcs}.

So, all in all, we have a few algebraic computational steps that cost a constant time;  to compute all intervals on the source and on the target, as well as $v_{1}$ and $v_{2}$ vertices and the corresponding reflex-vertices, we just have to run the algorithm presented by \cite{tcs}. This algorithm has been proved to need linear time based on the complexity of $\P$. So, we are done. 

\end{proof}

\emph{Approach for case 3:}
\begin{enumerate}
\item
We compute every mirror-visible point on $target$ via every edge of $\P$. To do this, we use the algorithm mentioned in subsection 5.4 of the article \cite{tcs}; they use a segment denoted by $\seg{xy}$ as the viewer and a segment denoted by $\seg{uw}$ as the target. Based on the results of \cite{tcs}, every such mirror-visible part can be computed in the linear time. Furthermore, we will obtain the intervals on $ed$ that make those mirror-visible parts on the target. So, there will be at most $n$ intervals on $ed$ that can make at most $n$ mirror-visible parts on the target. Put the intervals on $ed$ in a set denoted by $\EdI$. Put
the mirror-visible parts on  $target$
in a set denoted by $\TI$. 
\item
Check the points in $\TI$ to find new intervals visible via direct-visibility and also, a mirror. These intervals determine the of $dmvm$ points projection. 
\item
Compute all $dmvm$ points.
 Add $dmvm$ points and the endpoints of the intervals in $\EdI$ to a set denoted by $\EdP$. 
\item
For every point in $\EdP$, compute the direct visible parts on $target$. We can add these direct visible parts to another set. Now, we can sweep on $ed$; consider the points on $\EdP$ as the events; the sweeper checks if in every event the direct visible parts in combination with the mirror-visible parts can make the whole $target$ visible.  For every consequent event, if both points could see the entire target and also they are connected by at least on interval in $\EdI$, then we add the part between these  two event points to a one that can see the whole target.
\end{enumerate}

The sweeper we use in case 3 needs $\mathcal{O}(n\log(n))$ time. Case 3 also runs for every edge of a target second-order-convex-region; so, the time complexity we need for this case is: $\mathcal{O}(n|target|\log(n))$; the complexity of the $|target|$ equals $|\SCR|$.

Note that from the previous cases, we already know the parts on $ed$ that make the target completely visible. So, if we reach such an interval, we can immediately attach it to the current interval we are computing.

At the end, considering all situations, we can find the intervals on $ed$ that can make $target$ completely visible.

%%%%%%%%%%%%%%%%%%%%%%%%%%%%%%%%%
\subsection{Decomposing a second-order-convex-region}
\label{sub.algo.intersect}
In this subsection, we see how to decompose a given second-order-convex-region into distinct and disjoint guarding-regions. This subsection is the complete version of Subsection~\ref{sub.algo.intersect-summary}. Here, we also prove Lemma~\ref{proof:decompose}.
From the previous steps of Algorithm~\ref{algo.main}, we know that a temp-sub-region is created based on the collision of two half-lines with the boundary of the given second-order-convex-region. The approach sweeps on the given second-order-convex-region. When the sweeper reaches a half-line that starts/ends a temp-sub-region, we add/eliminate the visible second-order-convex-region corresponding to that temp-sub-region to the visible-list of currently in progress guarding-regions. If the sweeper reaches an intersection point of many half-lines, there could be more than one second-order-convex-region that we should add or eliminate from the currently in progress guarding-regions. The approach is presented in details in the following:

\emph{Decomposition-Process:}
Consider a second-order-convex-region $scr_{i}$. To decompose $scr_{i}$, we use a sweep-line called $\cal{SL}$. The sweeping process starts from an edge $e$ of $scr_{i}$. The sweep-line $\cal{SL}$ first contains $e$ and moves on $scr_{i}$ parallel to $e$ till it reaches the end of $scr_{i}$, where $\cal{SL}$ sees all points of $scr_{i}$.

The final decomposition of $scr_{i}$ results in several newly created guarding-regions. The Decompose procedure puts the final guarding-regions in a set called $s_{i}$. 
Note that we have to compute a visible-list for every new guarding-region, $gr$. We already know that $\VL(gr \in scr_{i})$ contains $scr_{i}$. 
The sweep-line $\cal{SL}$ has a list of currently in progress guarding-regions that should decide where to end them. Denote this list by $\inPL$. Denote the number of guarding-regions in the current list of the sweep-line by $|\inPL|$. 
During the sweeping process, $\cal{SL}$ may encounter three cases: 1) It may reach a starting half-line $shl$,  2) it may reach an intersection-point of several $shl$ or $ehl$ half-lines,  or 3) it may reach an ending half-line $ehl$.

\begin{figure}[tp]%figure2
\begin{center}
\includegraphics[scale=0.6]{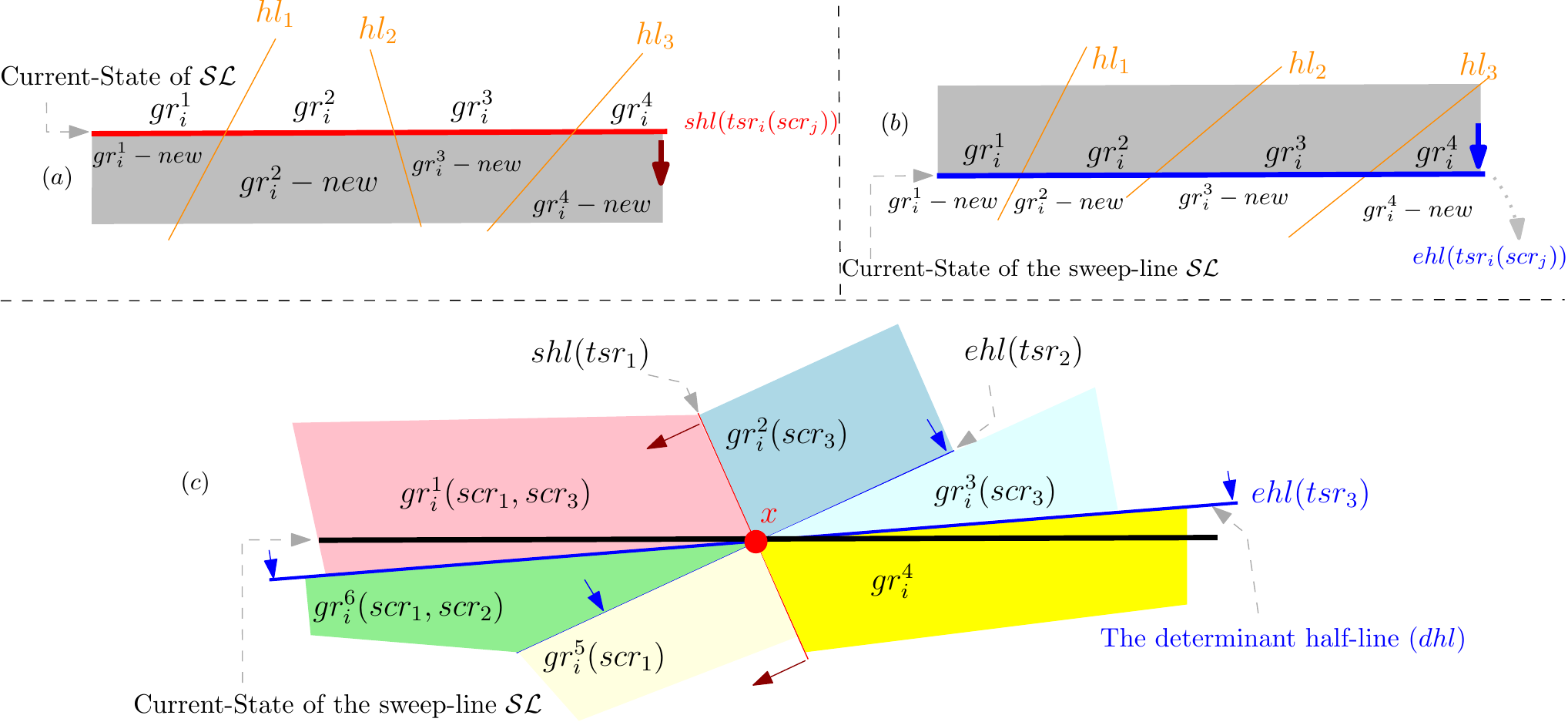}
\caption{
Figure $(a)$ shows that when $\cal{SL}$ reaches a $shl(tsr_{i}(scr_{j}))$ half-line, we should create new guarding-regions with new visible-lists. We have to add $scr_{j}$ to the visible-list of the guarding-regions that $\cal{SL}$ has just passed. Figure $(b)$ illustrates a similar situation where $\cal{SL}$ reaches an ending half-line $ehl(tsr_{i}(scr_{j}))$. In such a situation, we have to end the currently in process guarding-regions and create new ones with new corresponding visible-lists that do not contain $scr_ {j}$ in their visible-lists. In Figure, $(c)$ $\cal{SL}$may reach an intersection point, $x$, of several half-lines (either $shl$ or $ehl$). In this case, there is a determinative half-line $dhl$ that passes $x$. This half-line can be either a starting half-line or an ending one. Since the sweep-line $\cal{SL}$ has previously encountered $dhl$, the half-line $dhl$ does not make any new guarding-region itself. Nevertheless, the intersection-point $x$ makes new guarding-regions from the previously visited half-lines. As can be seen in Figure (c), $gr^6$ exhibited in green or $gr^5$ in white are the new guarding-regions starting from $x$. However, $gr^2$ and $gr^3$ are ended at $x$. For $gr^1$ and $gr^4$, it is sufficient to update their visible-lists. Based on the type of each half-line, we can easily create the new guarding-regions around $x$; also, we can determine their corresponding visible-lists. Note that if $dhl (tsr_{i}(scr_{j}))$ is parallel to the direction of $\cal{SL}$, then the sweep-line reaches a new half-line and a point; the creation process of the new guarding-regions is a combined procedure of the above-mentioned strategies. In such a case, we have to consider $scr_{j}$ as an additional or removal region in the visible lists of the new guarding-regions below $dhl(tsr_{i}(scr_{j}))$.}
\label{fig.decopposing.cases}
\end{center}
\end{figure}

Consider an arbitrary $scr_{k} \neq scr_{i}$ and suppose that $tsr \in scr_{i}$ can make $scr_{k}$ area-visible. 
There are three different cases. See Figure~\ref{fig.decopposing.cases} as an example.
In this figure, the numbers on power of $gr$ are normal indicators to itemize different guarding-regions. This figure illustrates three different cases a sweep-line may encounter while tracing a second-order-convex-region $(scr_{i})$ to decompose it into guarding-regions. The three cases:

 1,2) If $\cal{SL}$ reaches a starting half-line $shl(tsr_{i}(scr_{k}))$ or an ending half-line $ehl(tsr_{i}(scr_{k}))$;
 
   Do these steps:
    \begin{enumerate}
        \item  End all guarding-regions in $\inPL$.
        That is because all previously submitted guarding-regions in $\inPL$ are ended by the new line. 
        \item  Add every newly ended guarding-region $gr$ to $s_{i}$ (the output set).
        \item Based on the number of crossing lines with the half-line $shl(tsr_{i})/ehl(tsr_{i})$, create new guarding-regions in $\inPL$. Each of such new guarding-regions starts from $e/shl(tsr)$ and has its own boundary.
        \item Create a visible-list for each new $gr \in \inPL$ ($\VL(gr)$) with the initiation value of $scr_{i}$.
       \item For each $gr \in \inPL$, if $gr_{x}$ has $gr_{y}$; on the other side; (the already swept side) of $e/shl(tsr)$, set $\VL(gr_{x}) \longleftarrow \VL(gr_{y})$.
       Note that we only check one line at a moment, so only one temp-sub-region and consequently, the visibility of one second-order-convex-region are checked, and other visible second-order-convex-regions are still visible.

       \item For each new $gr \in \inPL$,
       
       {\bf if} $\cal SL$ reaches $shl(tsr_{i}(scr_{k}))$;

       $\ \ \ $Add $scr_{k}$ to $\VL(gr)$.
       
       {\bf else} $\cal SL$ reaches $ehl(tsr_{i}(scr_{k}))$;
       
        $\ \ \ $Remove $scr_{k}$ from $\VL(gr)$.
       
    \end{enumerate}
  3)
   If $\cal{SL}$ reaches an intersection-point $x$ of $k$ $shl$ half-lines and $k'$ $ehl$ half-lines\\, without loss of generality, suppose all of these half-lines are on distinct lines, and assume that the sweep-line is slightly tilted, so that the intersections are considered one by one. 
   We already know that $|inPL|=\frac{k+k'}{2}$ (see \autoref{fig.decopposing.cases}(c)). So, all half-lines are already seen by $\cal{SL}$. The most recently visited half-line is either a $shl$ or a $ehl$. Again, w. l. o. g. suppose it is $shl(tsr_{i}(scr_{k}))$, and $shl(tsr)$ is parallel to the direction of the sweep-line $\cal{SL}$.
   
   Follow the following steps:

        \begin{enumerate}
        \item  End every guarding-region in $\inPL$. These guarding-regions are above $shl(tsr)$, where $\cal{SL}$ is moving before it reaches $x$.
        \item  Add every newly ended guarding-region to $s_{i}$.
        \item  Based on the lines crossing $x$, create new guarding-regions in $\inPL$.
        \item Create a visible-list for each new guarding-region ($gr$) in $\inPL$ ($\VL(gr)$).
        \item Add $scr_{k}$ to every $\VL(gr)$. %visible-lists.
        \item There are nv  newly guarding-regions created in $s_{i}$. For each $gr$, check it with each half-line $hl$ intersected in $x$.
        
        (a) If $hl$ is a starting half-line $shl(tsr_{i}(scr_{k}))$ and $gr$ lies between $shl(tsr_{i}(scr_{k}))$ and $ehl(tsr_{i}(scr_{k}))$, then add $scr_{k}$ to $\VL(gr)$.
        
        (b) If $hl$ is an ending half-line $ehl(tsr_{i}(scr_{k}))$ and $gr$ is not between $shl(tsr_{i}(scr_{k}))$ and $ehl(tsr_{i}(scr_{k}))$ anymore, then remove $scr_{k}$ from $\VL(gr)$. 
        
        \end{enumerate}

Set $s_{i}$ from the above-mentioned approach as the output, which is the decomposed set of a given second-order-convex-region.

%%%%%%%%%%%%%%%%%%%%%%%%%%%%%%%%%%%%%%%%%%%%%%%%%%%%%%%%%%%%%%%%%%%%%%%%%%%%%%%%%%%%%%%%%%%%%%%%%%%%%%%%%%%%%%%%%%%%%%%%%%%%%%%%%%%%%%%%%%%%%%%%%%%%%%%%%%%%%%%%%%%%%%%%%%%%%%%%%%%%%%%%%%%%%%%%%%%%%%%%%%%%%%%%%%%%%

\subsection{Proof of Lemma~\ref{lem:2-app}}
\label{sub; proof}
%\begin{lemma}
$|\OPT(\S)|$ $\leq \frac{3}{2}$ $|\OPT^{*}|$
%\end{lemma}
\begin{proof}
Every guard $g_{opt}$ in $\OPT^{*}$ covers a sub-area of $\P$ denoted by $\VP(g_{opt})$, which is in fact, the visibility polygon of $g_{opt}$.
Pick an arbitrary point $p$ in a guarding-region $gr$. Set $g_{s}$ to be in the position specified by $p$. The guard $g_{s}$ can see a sub-area of $\P$ denoted by $\VP(g_{s})$.
The proposed algorithm tries to estimate $\VP(g_{s})$ with a sub-set of second-order-convex-regions. This sub-set is determined by the visible-list of a guarding-region ($\VL(gr)$). 
Choosing a guarding-region in $\S$ corresponds to selecting a position for a guard.

The guards in $\OPT^{*}$ also cover all second-order-convex-regions.
Remember that every guard Algorithm~\ref{algo.main} counts only on the complete visibility of second-order-convex-regions.
If some guards in $\OPT^{*}$ can see more second-order-convex-regions completely, then we can choose the same guards to be in $\OPT(\S)$ to cover the same second-order-convex-regions. Otherwise, $|\OPT^{*}|$ and $|\OPT(\S)|$ would have the same number of guards. 
So, we only have to focus on situations where at least one second-order-convex-region is covered by more than one previously chosen guard in $\OPT^{*}$, so that this second-order-convex-region is covered by the union of the partial visibility of those guards in $\OPT^{*}$. In $\OPT(\S)$, we have to count on extra guards to cover partially visible second-order-convex-regions completely.
We will see that since reflex vertices are connected to the intersection points from the criss-cross of the lines of the polygon,  those situations where a few guards in $\OPT^{*}$ can see many second-order-convex-regions partially cannot happen. However, depending on the positions reflex vertices and guards may choose, the following is needed to be investigated for the proof to be complete.

The problem happens when two or more guards in $\OPT*$ make some second-order-convex-regions completely visible by using the partial visibility of each guard. In $\OPT(\S)$, to cover those second-order-convex-regions, there must be some extra guards, so that each of the second-order-convex-regions gets area-visible by a specific guard (see Figure~\ref{fig.analysis}(a)).

Consider a guard $g_{opt}$ in $\OPT^{*}$; for $g_{opt}$, to see a second-order-convex-region partially, $g_{opt}$'s visibility must be blocked by a reflex-vertex.

Suppose we have $k$ guards in $\OPT^{*}$ that can see $k’ \geq  2$ second-order-convex-regions together and via the partial visibility of each one. See Figure~\ref{fig.analysis}(b). Consider a guard $g^{i}_{opt}$ that cannot see a second-order-convex-region $scr^{'}$ completely because of $\rf_{1}$'s blocking. The rest of $scr^{'}$ should get visible by some other guards $g^{k}_{opt} \ k\neq i$ in $\OPT^{*}$. 
Denote the $g^{i}_{opt}$-visible part of $scr^{'}$ by $\Vp(g^{i}_{opt},scr^{'})$. 
Consider another guard $g^{j}_{opt}$ that is responsible for the visible part very close to $\rf_{1}$.
Since $\rf_{1}$ is a reflex vertex, from the first steps of Algorithm~\ref{algo.main}, we know that there must be a vertex or an intersection point, denoted by $v_1$, which is connected to $\rf_{1}$.
Choose $v_1$, so that the line containing $\seg{v_{1}\rf_{1}}$ contains an edge of $\Vp(g^{j}_{opt},scr^{'})$. 
Note that w.l. o. g we suppose that the visible part $\Vp(g^{j}_{opt},scr^{'})$ is only completely visible to $g^{j}_{opt}$ and no other guards in $g^{k}_{opt} \ k\neq j$.  Again, without the loss of generality, suppose $g^{i}_{opt}$ and $g^{j}_{opt}$ can see another second-order-convex-region $scr^{''}$ partially because of blocking other reflex-vertices. Choose $\rf_{3}$ to block the visibility of $g^{i}_{opt}$ not to see $scr^{''}$ completely. Consider the point $v_4$, so that the unique line that contains $\seg{v_{4}\rf_{4}}$ crosses an edge of $\Vp(g^{i}_{opt},scr^{''})$. 
The guard $g^{i}_{opt}$ is responsible to see both $\Vp(g^{i},scr^{''})$ and $\Vp(g^{i},scr^{'})$ areas and perhaps, more such areas. 
Also, the guard $g^{j}_{opt}$ is responsible for $\Vp(g^{j},scr^{''})$ and $\Vp(g^{j},scr^{'})$ areas.

As mentioned above, we know that $g^{i}_{opt}$ should see both $\Vp(g^{i},scr^{'})$ and $\Vp(g^{i},scr^{''})$ areas or more. However, consider the unique line that contains $\seg{v_{4}\rf_{4}}$, and the unique line that contains $\seg{v_{1}\rf_{1}}$. These lines must intersect in a point. That is because there is a point with the name of $g^{i}_{opt}$ in the optimal solution that sees those areas. The lines cannot be parallel because there would be no point to see both $\VP(g^{i}_{opt},scr^{''})$ and $\VP(g^{i}_{opt},scr^{'})$ areas simultaneously. So, there is an intersection point $p$. Algorithm~\ref{algo.main} connects $p$ to every reflex vertex of $\P$. Note that the algorithm does this whether $p$ is inside $\P$ or not. So, for every $g_{opt}$ corresponding to every partially visible area $\Vp(g_{opt},scr)$, there is at least one second-order-convex-region. This contradicts a $\Vp(g_{opt},scr)$ of a guard to be a partial visible part of a second-order-convex-region.

 In the mentioned proof, we picked two arbitrary guards from $\OPT^{*}$, proving that they cannot see two or more second-order-convex-regions via simultaneous partial visibility. The fact that reflex-vertices were connected to the intersection points of lines during the construction of second-order-convex-regions helped us in getting the proof. However, every two guards may see {\bf one} second-order-convex-region together.
 For every such two guards in $\OPT^{*}$, we may need an additional guard in $\OPT(\S)$ to make the corresponding second-order-convex-region area-visible. This extra guard is denoted by $g_{e}$ in Figure~\ref{fig.analysis}(a).
 So, in the worst-case corresponding to every two guards in $\OPT^{*}$, we might need three guards in $\OPT(\S)$.  Thus, we are done with the proof.

\begin{figure}[tbp]%figure2
\begin{center}
\includegraphics[scale=0.7]{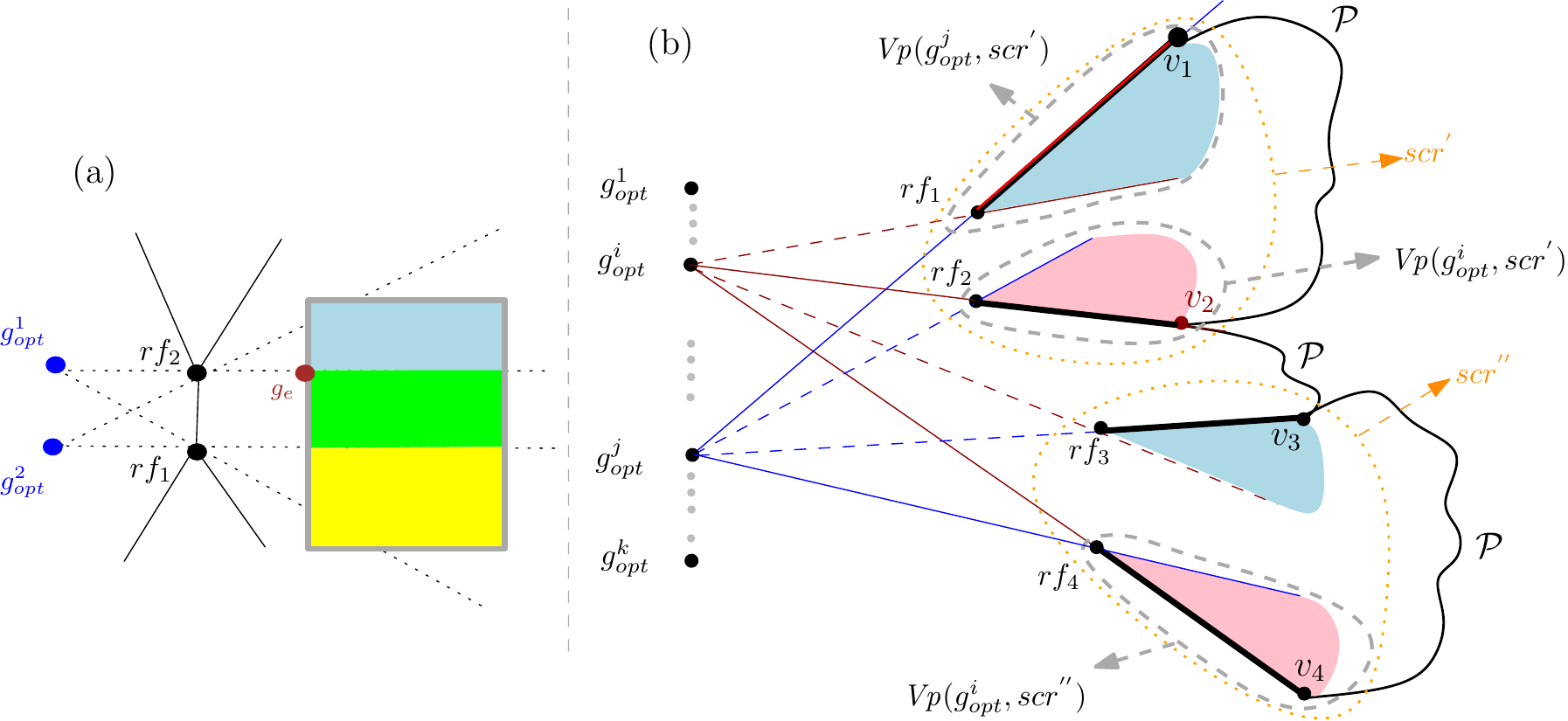}
\caption{$(a)$; in this case, we can add an extra guard $g_e$ and the whole second-order-convex-region is covered in $\OPT(\S)$.
 $(b)$ This case reveals that if there are more than one second-order-convex-region all covered by the integration of the partial visibility of a few guards in $\OPT^{*}$, in $\OPT(\S)$, we have the same number of guards in  place of each of those guards in $\OPT^{*}$. In other words, corresponding to the areas covered by those optimal guards, in such a situation, there are second-order-convex-regions that can get entirely visible by guards in the same positions. Furthermore, those second-order-convex-regions cover the whole surface of those areas. So, these guards are covered in $\OPT(\S)$ too.
 Figure $(c)$; blue and pink second-order-convex-regions indicate the second-order-convex-regions in the right side areas. Two areas are covered with two guards in $\OPT^{*}$. Thus, we have four second-order-convex-regions that make these areas to get covered by the same number of guards in $\OPT(\S)$.  }
\label{fig.analysis}
\end{center}
\end{figure}
\qed
\end{proof}

In fact, we proved that the way Algorithm~\ref{algo.main} decomposed $\P$ into second-order-convex-regions and guarding-regions could make it so close to the real optimal solution.
\section{List of Notations}
To better present and convince of the reader, we list the frequent notations used throughout the paper.

$scr$ = a second-order-convex-region

$gr$ = a guarding-region

$\rf$ = a reflex-vertex

$\VP$ = visibility polygon of a point or a guard

$\Vp$ = visible part of a guard inside a second-order-convex-region

$\VL$ = visible-list of a guarding-region

$\S$ = the set of all guarding-regions after the decomposition of $\P$ into guarding-regions by beginning Step 4 of Algorithm~\ref{algo.main}

$\OPT^{*}$ = the set of guards in the optimal solution of the point guarding problem

$\OPT(\S)$ = the set of guards in the optimal solution that can be obtained from $\S$.
\end{document}